\documentclass[11pt]{article}

\usepackage[a4paper,margin=1in]{geometry}
\usepackage{amsmath,amssymb,amsthm}
\usepackage{hyperref}
\usepackage{enumitem}
\usepackage{microtype}
\usepackage{xcolor}
\usepackage{subcaption}
\usepackage{float}
\usepackage{pgfplots}
\usepackage{tikz-cd}
\usepackage{needspace}
\pgfplotsset{compat=1.18}

\hypersetup{
  colorlinks=true,
  linkcolor=blue,
  citecolor=blue,
  urlcolor=blue
}

\newcommand{\ProofInAppendix}{\textit{Full proof in Appendix~\ref{app:fullproofs}.}}

\newtheorem{theorem}{Theorem}[section]
\newtheorem{lemma}[theorem]{Lemma}

\newtheorem{corollary}[theorem]{Corollary}

\theoremstyle{definition}
\newtheorem{definition}[theorem]{Definition}
\newtheorem{assumption}[theorem]{Assumption}
\newtheorem{remark}[theorem]{Remark}

\title{From Swap Axioms to Weighted Geometric Means:\\
A Characterization of AMMs}
\author{Bj\"orn Assmann\thanks{\texttt{bjoern.assmann@gmail.com}} \and Ulan Degenbaev\thanks{\texttt{udegenbaev@gmail.com}}}
\date{\today}

\begin{document}
\maketitle

\begin{abstract}

Many automated market makers can be understood through the geometry of their
\emph{trading orbits}, the sets of states reachable from one another through swaps.
In prominent designs, this geometry is captured by a simple closed-form invariant
such as the constant product $xy$ in Uniswap or a weighted geometric mean $x^w
y^{1-w}$ in Balancer.

This paper explains why these forms arise by deriving them from three basic assumptions:
\emph{validity invariance} (swaps preserve the validity of states),
\emph{Pareto efficiency} (no state on an orbit weakly dominates another),
and \emph{unit invariance} (changing measurement units does not change the mechanism).
Together, these force every trading orbit of a two-asset AMM to be a level set of a
weighted geometric mean $x^wy^{1-w}$.
Applied pairwise, the axioms extend the classification to $n$-asset pools:
orbits are level sets of $\prod_i x_i^{w_i}$ with positive weights $w_i$ summing to $1$.
Imposing token-relabeling symmetry then pins down the weights, recovering the
constant-product form $xy$ in the two-asset case and $\prod_i x_i$ in general.

The main text provides an intuitive proof sketch and discusses fees and liquidity operations.
Complete proofs and a machine-checked Lean~4 formalization accompany the paper.
\end{abstract}

%%%%%%%%%%%%%%%%%%%%%%%%%%%%%%%%%%%%%%%%%%%%%%%%%%%%%%%%%%%%%%%%%%%%%%%%%%%%%%%%
\section{Introduction}

Automated market makers (AMMs) are on-chain mechanisms that execute token swaps without an order book.
In a two-token pool, the reserve pair of token balances determines which swaps are feasible and at what prices.
The constant-product AMM popularized by Uniswap~\cite{UniswapV2} and the weighted geometric-mean AMM
used by Balancer~\cite{Balancer} are the canonical examples, but they sit inside a much larger design space of possible
reserve-based trading mechanisms.
This raises a natural question: can one derive the familiar invariant curves
from simple assumptions on the swap rule itself?

In this paper we study this question for AMMs with positive reserves and unrestricted swaps.
Rather than postulating an invariant function from the outset, we focus on the \emph{trading orbits},
which are the sets of reserve states reachable from one another by swaps.
We impose three structural requirements: swaps must preserve the validity of states (validity invariance),
no reserve state on a trading orbit may strictly dominate another coordinatewise (Pareto efficiency) and
the trading rule must be invariant under independent rescaling of the units used to measure each asset (unit invariance).

\paragraph{Two-asset classification.}
In the two-asset case, these three requirements are already highly restrictive.
Our main theorem shows that every trading orbit must be a weighted geometric-mean curve
\[
x^{w} y^{1-w} = k
\]
for some $w \in (0,1)$ and $k>0$.
The weighted geometric-mean AMMs are therefore not merely examples but forced by the axioms.
The constant-product case appears as the special case $w = 1/2$.

\paragraph{Multi-asset extension.}
The classification extends to $n$-asset AMMs. Given such an AMM, each pair $(i,j)$ of reserves defines a \emph{two-token restriction}: the intersection of a global trading orbit with the coordinate slice obtained by freezing all reserves except $x_i$ and $x_j$. If every such restriction of the AMM satisfies the two-asset axioms, then every global trading orbit is a level set of a weighted product $\prod_i x_i^{w_i}$ with positive weights $w_i$ summing to $1$. The argument reduces to showing that a subset of $\mathbb R^n$ whose every coordinate $2$-slice is an affine line of negative slope must itself be a hyperplane with positive normal.

\paragraph{Proof method.}
Our proof is geometric and elementary. The key step is to pass to log reserves $(u,v)=(\log x,\log y)$, where token-wise rescaling becomes translation. This turns the orbit structure into a translation-invariant geometry in $\mathbb R^2$, reducing the problem to a classification of additive subgroups and their cosets subject to Pareto monotonicity: the subgroup must be a line through the origin, and every orbit a parallel translate of it. Exponentiating back yields the weighted geometric-mean form. The same log-coordinate passage applies to the $n$-asset case, with hyperplanes playing the role of lines.

\paragraph{Formalization.}
We formalize the main theorem and its multi-asset extension in Lean~4. The machine-checkable proofs give confidence in our arguments and show that foundational results in DeFi mechanism design can be expressed and verified in a modern theorem prover.

\paragraph{Related work.}
Much of the CFMM literature takes a specified AMM mechanism, trading
function, or invariant as given and studies the resulting pricing, arbitrage behavior, oracle
properties, or liquidity-provider outcomes (see, e.g.,~\cite{AngerisChitraOracles2020,AngerisAgrawalEtAlMultiAsset2021,MilionisEtAlLVR2022}).
Our paper reverses this perspective.
Rather than postulating an invariant and analyzing its consequences, we ask which orbit
geometries are compatible with a swap mechanism satisfying a small number of basic
axioms.

The closest conceptual relative is Lee~\cite{Lee}, who studies general state-transition market systems
from an order-theoretic viewpoint and proves abstract existence and uniqueness results for
increasing invariants. In that framework, arbitrage-freeness is equivalent to the existence of an
increasing invariant, while completeness is equivalent to uniqueness up to monotone
reparameterization. Another paper close in spirit is~\cite{FrongilloPapireddygariWaggoner2023},
which shows that market makers satisfying axioms on valid trades admit a
potential-function representation. Our result is complementary. In the present two-asset setting, it is more explicit and
constructive. Instead of establishing the existence of some representing invariant for a broad class
of systems, we work directly with primitive swap operations and show that validity invariance,
Pareto efficiency, and unit invariance under token-wise rescaling force the trading orbits
themselves to be weighted geometric-mean curves ($x^w y^{1-w} = k$). Imposing token-relabeling
symmetry further yields the constant-product form ($xy = k$). Thus, in our setting, the invariant
family is not assumed and then analyzed, nor obtained indirectly through a general representation
theorem; it is derived directly from structural axioms by an elementary geometric argument.

Where existing Lean developments for AMMs verify properties of a given CFMM design~\cite{PuscedduBartolettiLeanAMM2024,DessalviBartolettiLluchLafuenteLeanFees2026}, our formalization verifies a structural classification result for the family of designs allowed by the axioms.

\needspace{10\baselineskip}
\paragraph{Contributions.}
In summary, the main contributions are the following:
\begin{enumerate}[nosep]
  \item State a minimal set of natural assumptions on the swap operation of a two-asset AMM.
  \item Show that these assumptions force the trading orbits to be weighted geometric-mean curves.
  \item Extend the main theorem to multi-asset AMMs.
  \item Characterize how fees and liquidity operations interact with the orbit geometry.
  \item Formalize the main theorem and the multi-asset extension in Lean~4.
\end{enumerate}

\paragraph{Overview.}
Section~\ref{sec:model} defines the minimal model and the three assumptions.
Section~\ref{sec:proofsketch} gives an intuitive, picture-driven proof sketch and extends the result to arbitrarily many assets.
Section~\ref{sec:fees_liquidity} discusses how fees and liquidity operations interact with the orbit geometry.
Section~\ref{app:lean_v2} discusses the Lean formalization.
Appendix~\ref{app:fullproofs} contains the complete proofs.

%%%%%%%%%%%%%%%%%%%%%%%%%%%%%%%%%%%%%%%%%%%%%%%%%%%%%%%%%%%%%%%%%%%%%%%%%%%%%%%%
\section{Model}
\label{sec:model}

\subsection{Definitions}

An automated market maker (AMM) pool manages reserves of two tokens, $X$ and $Y$.
Traders swap assets by depositing one token and withdrawing the other.
While real-world deployments track additional metadata such as fee accumulators, liquidity-provider shares, or tick ranges, we focus on an idealized model in which the reserve balances alone determine the state.
This abstraction isolates the geometric structure of trading orbits.

\begin{definition}[State]\label{def:state}
A \emph{state} is a pair $s=(x,y)\in\mathbb{R}^2$ representing the reserves of
tokens $X$ and $Y$ in the AMM pool.
The \emph{valid} state space is
\[
\mathcal S := \{(x,y)\in\mathbb{R}^2 \mid x>0,\; y>0\}.
\]
\end{definition}

We allow the primitive swap maps to be defined on all of $\mathbb R^2$, not just on $\mathcal S$.
This is convenient because the orbit relation below is defined as the symmetric closure of one-step swaps.
Validity will then be enforced by an axiom stating that swaps preserve validity status.

\begin{definition}[Swap primitives]\label{def:swap}\leavevmode
\begin{itemize}[leftmargin=1.6em]
\item $X(s,dx)$ denotes the post-swap state obtained by inputting $dx\ge 0$ units of token $X$ into the pool,
\item $Y(s,dy)$ denotes the post-swap state obtained by inputting $dy\ge 0$ units of token $Y$ into the pool.
\end{itemize}
Bookkeeping defines output maps $y_{\mathrm{out}}$ and $x_{\mathrm{out}}$ through
\[
X(s,dx)=(x+dx,\; y-y_{\mathrm{out}}(s,dx)),
\qquad
Y(s,dy)=(x-x_{\mathrm{out}}(s,dy),\; y+dy),
\]
where $s=(x,y)$.
Thus $y_{\mathrm{out}}(s,dx)$ is the amount of token $Y$ paid out when $dx$ units of token $X$ are input at state $s$,
and similarly for $x_{\mathrm{out}}(s,dy)$.
\end{definition}

We ignore fees in the axiomatic development that follows.
This isolates the structural constraints imposed by the axioms; the interplay of fees and liquidity operations with the orbit geometry is treated in Section~\ref{sec:fees_liquidity}.

A single swap gives a one-step state transition.
We first define adjacency by a single primitive swap and then take the transitive closure.

\begin{definition}[Adjacency]\label{def:adjacency}
For states $s,t$, write $s\leftrightarrow t$ if one can be obtained from the other
by a single primitive swap, meaning there exists $\delta\ge 0$ such that
\[
t=X(s,\delta)\quad\text{or}\quad t=Y(s,\delta)\quad\text{or}\quad
s=X(t,\delta)\quad\text{or}\quad s=Y(t,\delta).
\]
\end{definition}

This relation is symmetric by construction.

\begin{definition}[Orbit relation]\label{def:orbit}
Define $s\sim t$ if there exist states $s=s_0,s_1,\ldots,s_n=t$ such that
$s_{i-1}\leftrightarrow s_i$ for all $i=1,\ldots,n$.
For a valid state $s\in\mathcal S$, its \emph{trading orbit} is
\[
[s]:=\{t\in\mathcal S : t\sim s\}.
\]
\end{definition}

Thus two valid states lie in the same trading orbit precisely when they can be connected by a finite chain of primitive swaps.
The question of the paper is: \emph{what geometric shape can these orbits have?}

\subsection{Assumptions}

The abstract swap maps $X$ and $Y$ do not yet specify how outputs depend on reserves and inputs.
To obtain structure we impose three assumptions.

Our first assumption states that a primitive swap preserves the validity status of a state.
Equivalently, valid states remain valid under swaps, and invalid states never become valid through a single swap.

\begin{assumption}[Validity invariance]\label{ass:validity}
For every state $s\in\mathbb R^2$ and every $dx,dy\ge 0$,
\[
X(s,dx)\in\mathcal S \iff s\in\mathcal S,
\qquad
Y(s,dy)\in\mathcal S \iff s\in\mathcal S.
\]
\end{assumption}

Assumption~\ref{ass:validity} ensures that validity is constant along every adjacency edge, and hence along every orbit:
a valid state and an invalid state can never lie in the same orbit.

Our second assumption is an internal no-free-lunch condition.
Within a trading orbit, one should not be able to move to a state with weakly more of both reserves.

\begin{definition}[Pareto order]\label{def:pareto}
Write $t\succeq s$ if $t.x\ge s.x$ and $t.y\ge s.y$.
\end{definition}

\begin{assumption}[Pareto efficiency]\label{ass:pareto}
For all valid states $s,t\in\mathcal S$, if $s\sim t$ and $t\succeq s$, then $t=s$.
\end{assumption}

This condition is numeraire-free: it compares reserve vectors directly and does not refer to any external price oracle or valuation map.

Our third assumption formalizes invariance under the choice of measurement units.
Changing from ETH to wei, or from dollars to cents, rescales the recorded balances but should not alter the economic mechanism.

\begin{assumption}[Unit invariance]\label{ass:unit}
For any $a,b>0$ define $T_{a,b}(x,y)=(a x, b y)$.
For every valid state $s\in\mathcal S$ and all $dx,dy\ge 0$,
\[
X(T_{a,b}s,\,a\,dx)=T_{a,b}X(s,dx),
\qquad
Y(T_{a,b}s,\,b\,dy)=T_{a,b}Y(s,dy).
\]
In diagram form (shown for the $X$-swap):
\[
\begin{tikzcd}
s \ar[r,"T_{a{,}b}"] \ar[d,"X(\cdot{,}\,dx)"'] & T_{a{,}b}(s) \ar[d,"X(\cdot{,}\,a\,dx)"] \\
X(s{,}dx) \ar[r,"T_{a{,}b}"'] & T_{a{,}b}\bigl(X(s{,}dx)\bigr)
\end{tikzcd}
\]
\end{assumption}

In practice, on-chain arithmetic uses finite precision, so unit invariance holds only approximately.
Our model works over the reals and treats it as exact.
This isolates the structural constraints that unit invariance imposes on trading orbits.

Taken together, Assumptions~\ref{ass:validity}--\ref{ass:unit} force a remarkably rigid orbit geometry.

\begin{theorem}[Orbit classification]\label{thm:main}
Consider a two-asset AMM in the setting above,
satisfying validity invariance, Pareto efficiency, and unit invariance.
Then there exists $w\in(0,1)$ such that the trading orbits in
$\mathcal S$ are exactly the level sets of
\[
\Phi(x,y) := x^w y^{1-w}.
\]
\end{theorem}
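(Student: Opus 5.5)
The plan is to pass to logarithmic coordinates, where unit rescaling becomes translation, and then classify the orbit partition as the cosets of an additive subgroup of $\mathbb R^2$. Write $L(x,y)=(\log x,\log y)$, a bijection $\mathcal S\to\mathbb R^2$ under which $T_{a,b}$ becomes translation by $(\log a,\log b)$.

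\textbf{Step 1: the orbit partition is translation invariant.} Let $s,t\in\mathcal S$ with $s\sim t$ via a chain $s=s_0,\dots,s_n=t$. By Assumption~\ref{ass:validity}, validity is constant along each adjacency edge, so every $s_i$ lies in $\mathcal S$. Each edge has one of the forms $s_i=X(s_{i-1},\delta)$, $s_i=Y(s_{i-1},\delta)$, or the reverse with the roles of $s_{i-1}$ and $s_i$ exchanged. In every case the swap is applied to a valid state, so Assumption~\ref{ass:unit} applies and gives $T_{a,b}s_{i-1}\leftrightarrow T_{a,b}s_i$, using the input $a\delta$ or $b\delta$. Hence $T_{a,b}[s]\subseteq[T_{a,b}s]$. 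Applying the same argument with $T_{1/a,1/b}$ gives equality. In log coordinates this reads $O(p+c)=O(p)+c$, where $O(p):=L([L^{-1}p])$.

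\textbf{Step 2: the orbit of the origin is a subgroup.} Let $G:=O(0)$. If $g,h\in G$, then $O(g)=O(0)$, and translating by $h$ gives $O(g+h)=O(h)=O(0)$, so $g+h\in G$. Translating $O(g)=O(0)$ by $-g$ gives $O(0)=O(-g)$, so $-g\in G$. Therefore $G$ is an additive subgroup, and $O(p)=p+G$ for every $p$. Pareto efficiency (Assumption~\ref{ass:pareto}) translates, since $\log$ is monotone, into the statement that $G$ meets the closed quadrants $\mathbb R_{\ge0}^2$ and $\mathbb R_{\le0}^2$ only in $0$. The second quadrant follows from the first because $G=-G$.

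\textbf{Step 3: $G$ is the graph of a decreasing linear function.} This is the step I expect to be the main obstacle, because Pareto efficiency alone allows $G=\{0\}$; the swaps themselves must be used to make $G$ large. Fix $s=(1,1)$ and $dx>0$. The state $X(s,dx)=(1+dx,y')$ is valid, so $y'>0$. It lies in $[s]$, and Pareto efficiency forces $y'<1$. This produces an element of $G$ with first coordinate $\log(1+dx)$. Since $dx$ ranges over $(0,\infty)$, this covers every positive first coordinate, and closure under negation then gives every real first coordinate. Two elements of $G$ with the same first coordinate have a difference of the form $(0,\cdot)$ lying in $G$, which Pareto forbids unless it is $0$. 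Hence $G=\{(u,f(u))\}$ for a function $f:\mathbb R\to\mathbb R$. This $f$ is additive because $G$ is a group, and $f(u)<0$ for $u>0$ by Pareto, so $f$ is strictly decreasing. A monotone additive function is linear, so $f(u)=-cu$ for some $c>0$.

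\textbf{Step 4: exponentiate back.} The orbits are the lines $v+cu=\text{const}$, that is, the level sets of $x^{c}y$. Set $w=c/(1+c)\in(0,1)$. Then $x^cy=(x^wy^{1-w})^{1+c}$, and raising to the power $1+c$ is a bijection on $(0,\infty)$, so the level sets of $x^cy$ are exactly the level sets of $\Phi$. Every level set of $\Phi$ occurs as an orbit, since each such level set contains a point $p$ and $O(p)=p+G$ is that whole line. This gives Theorem~\ref{thm:main}.
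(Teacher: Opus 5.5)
Your proof is correct, and its skeleton matches the paper's. Log coordinates turn unit invariance into translation invariance, the orbit $G$ of the origin is an additive subgroup whose cosets are the orbits, the swap $X((1,1),dx)$ supplies every positive first coordinate, and Pareto gives uniqueness of the second coordinate.

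The genuine difference is how you obtain linearity. The paper proves a standalone lemma (Lemma~\ref{lem:subgroupLinear}) about subgroups of $\mathbb R^2$ contained in $\{uv\le 0\}$. It rules out two linearly independent elements by moving them into the closed fourth quadrant, treating axis vectors separately, and choosing integers $n,m$ so that $nh_1-mh_2$ lands in the open first quadrant. Only afterwards does it use full domain and uniqueness to identify $G$ with a whole line. You instead use the graph structure first: the group law makes $f$ additive, and you then invoke the classical fact that a monotone solution of Cauchy's equation is linear.

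Your route is shorter but hands the Archimedean argument off to that classical fact. The paper's route is self-contained, and its first half applies to any subgroup in the anti-diagonal region, whether or not it is a graph.

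Your closed-quadrant formulation of Pareto is also the right one. The region $\{uv\le 0\}$ alone does not exclude slope $0$: the horizontal axis satisfies every hypothesis of Lemma~\ref{lem:subgroupLinear}. So in the paper that case is really excluded by the strict decrease of Lemma~\ref{lem:Hgraph}, whereas you get $f(u)<0$ for $u>0$ immediately.

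Likewise, your Step~1 explicitly uses validity invariance to keep every intermediate state of a chain in $\mathcal S$ before applying unit invariance, which is only assumed on $\mathcal S$. The paper's Lemma~\ref{lem:TabMapsOrbits} passes over this point.
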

\noindent \ProofInAppendix

\begin{remark}
Conversely, any AMM whose orbits are level sets of $x^wy^{1-w}$ for some $w\in(0,1)$ satisfies the three axioms. Hence Theorem~\ref{thm:main} is in fact a characterization.
\end{remark}

The weight $w$ reflects the relative role of the two assets and is left free
by the axioms.
If we further require that relabeling the two tokens does not change the
orbit structure, the weight is pinned down.

\begin{corollary}[Token symmetry forces constant product]\label{cor:main_cp}
Under the hypotheses of Theorem~\ref{thm:main}, additionally assume
\emph{token relabeling symmetry}: swapping the labels of $X$ and $Y$
preserves the orbit relation.
Then $w=\tfrac12$ and every orbit satisfies $xy=\mathrm{const}$.
\end{corollary}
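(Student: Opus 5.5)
The plan is to apply Theorem~\ref{thm:main} and then show that relabeling symmetry is compatible only with $w=\tfrac12$. By Theorem~\ref{thm:main} there is a $w\in(0,1)$ such that the orbits in $\mathcal S$ are exactly the level sets of $\Phi(x,y)=x^w y^{1-w}$.

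Let $\sigma(x,y)=(y,x)$ denote the relabeling map. Token relabeling symmetry says that $s\sim t$ holds iff $\sigma s\sim\sigma t$. Since $\sigma$ maps $\mathcal S$ to itself, $\sigma$ therefore sends every orbit $[s]$ onto the orbit $[\sigma s]$. Consequently the level sets of $\Phi$ are mapped onto level sets of $\Phi$, and these are exactly the level sets of $\Psi:=\Phi\circ\sigma$, where $\Psi(x,y)=y^w x^{1-w}$.

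To extract $w$, I would pass to log coordinates $(u,v)=(\log x,\log y)$, as in the proof sketch. There the level sets of $\Phi$ are the parallel lines $wu+(1-w)v=c$, with normal vector $(w,1-w)$. The level sets of $\Psi$ are the lines $(1-w)u+wv=c$, with normal vector $(1-w,w)$.

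Because the two families of level sets coincide, the two line families must have the same direction. So the normals are proportional, $(w,1-w)=\lambda(1-w,w)$ for some $\lambda$. Both normals have coordinate sum $1$, which forces $\lambda=1$ and hence $w=1-w$, that is, $w=\tfrac12$.

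Concretely, one can avoid the logarithm. The orbit of $(1,1)$ is $\{\Phi=1\}$, which contains the point $(2,2^{-w/(1-w)})$. Applying $\sigma$ gives the point $(2^{-w/(1-w)},2)$, which must also satisfy $\Phi=1$. This yields the equation $2^{-w^2/(1-w)}\,2^{1-w}=1$. It reduces to $(1-w)^2=w^2$, so again $w=\tfrac12$. Then $\Phi=\sqrt{xy}$, and orbits are exactly the sets $xy=\mathrm{const}$.

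The only delicate step is the precise meaning of ``preserves the orbit relation''. I read it as $s\sim t\iff\sigma s\sim\sigma t$ for valid states. Using only the forward direction already suffices for the concrete computation above, since it only needs $\sigma$ to map the orbit of $(1,1)$ into the orbit of $\sigma(1,1)=(1,1)$. Everything else is routine.
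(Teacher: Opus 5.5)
Your proof is correct and follows essentially the same route as the paper: both pass to log coordinates and note that relabeling reflects the family of parallel orbit lines, so invariance of the orbit partition forces the reflected direction to equal the original one. The paper phrases this as slope $-c$ versus $-1/c$, you as normals $(w,1-w)$ versus $(1-w,w)$; your extra one-point check with $(2,2^{-w/(1-w)})$ additionally shows that only the forward implication of the symmetry assumption is needed.
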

\noindent \ProofInAppendix

The next section explains why the assumptions lead to weighted geometric-mean curves, and extends the classification to multi-asset pools.
%%%%%%%%%%%%%%%%%%%%%%%%%%%%%%%%%%%%%%%%%%%%%%%%%%%%%%%%%%%%%%%%%%%%%%%%%%%%%%%%
\section{Derivation of Weighted Geometric Means}
\label{sec:proofsketch}

This section presents a high-level derivation of the main result: under the three assumptions of Section~\ref{sec:model}, the trading orbits of a two-asset AMM are exactly the level sets of a weighted geometric mean $x^w y^{1-w}$. Section~\ref{sec:multiasset} extends the classification to multi-asset pools.
We emphasize the key ideas and geometric intuition, deliberately omitting technical details in favor of readability.
Readers seeking complete proofs should consult Appendix~\ref{app:fullproofs}, while those interested in full machine-checked rigor can find the Lean formalization in Section~\ref{app:lean_v2}.

We restate the theorem here for convenience.

\begin{theorem}[Orbit classification (Theorem~\ref{thm:main} restated)]
Consider a two-asset AMM in the setting of Section~\ref{sec:model},
satisfying validity invariance, Pareto efficiency, and unit invariance.
Then there exists $w\in(0,1)$ such that the trading orbits in
$\mathcal S$ are exactly the level sets of
\[
\Phi(x,y) := x^w y^{1-w}.
\]
\end{theorem}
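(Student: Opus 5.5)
The plan is to pass to logarithmic coordinates, where unit invariance becomes translation invariance of the orbit partition. There the orbits turn out to be the cosets of a single additive subgroup $G\le\mathbb R^2$, and Pareto efficiency together with the availability of swaps of every size forces $G$ to be a line of negative slope.

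\emph{Step 1 (orbits form a partition of $\mathcal S$, permuted by rescalings).} By Assumption~\ref{ass:validity}, every chain witnessing $s\sim t$ with $s\in\mathcal S$ stays inside $\mathcal S$. Hence $\sim$ restricted to $\mathcal S$ is an equivalence relation: reflexivity comes from $\delta=0$ or a length-zero chain, symmetry from the symmetry of $\leftrightarrow$, and transitivity from concatenating chains. The orbits $[s]$ therefore partition $\mathcal S$.

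By Assumption~\ref{ass:unit}, each adjacency edge $s\leftrightarrow t$ between valid states is mapped by $T_{a,b}$ to an edge $T_{a,b}s\leftrightarrow T_{a,b}t$, with the input rescaled by $a$ or $b$. Applying the same fact to $T_{1/a,1/b}$ shows that $T_{a,b}$ preserves $\sim$ in both directions. Consequently $T_{a,b}[s]=[T_{a,b}s]$.

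\emph{Step 2 (the orbits are cosets of a subgroup).} Set $L(x,y)=(\log x,\log y)$, which carries $T_{a,b}$ to translation by $(\log a,\log b)$. Define
\[
G:=L\bigl([(1,1)]\bigr)\subseteq\mathbb R^2 .
\]
By Step~1, $L([s])=L(s)+G$ for every $s\in\mathcal S$. We check that $G$ is a subgroup:
\begin{itemize}[nosep]
\item $0\in G$, since $(1,1)\in[(1,1)]$.
\item If $d,e\in G$, translating the relation $e\sim0$ by $d$ gives $d+e\sim d\sim 0$, so $d+e\in G$.
\item If $d\in G$, translating the relation $0\sim d$ by $-d$ gives $-d\sim 0$, so $-d\in G$.
\end{itemize}
Here "$\sim$" is read in log coordinates, i.e.\ after applying $L$.

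Pareto efficiency (Assumption~\ref{ass:pareto}) then says that $G$ meets the closed quadrant $\{u\ge0,v\ge0\}$ only at $0$. Since $G=-G$, the same holds for the closed negative quadrant.

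\emph{Step 3 (richness and graph structure).} For any $\alpha>0$, take $s=(1,1)$ and $dx=e^\alpha-1$. Since swaps are unrestricted, validity gives $X(s,dx)\in\mathcal S$, and this state lies in $[(1,1)]$. Its log coordinates are $(\alpha,\beta)$ for some $\beta$, and Pareto efficiency forces $\beta<0$. Using $G=-G$ and $0\in G$, every first coordinate $\alpha\in\mathbb R$ is therefore attained by some element of $G$.

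This element is unique. If $(\alpha,\beta),(\alpha,\beta')\in G$, then $(0,\beta-\beta')\in G$, and this vector lies in one of the two closed quadrants, so $\beta=\beta'$. Hence $G=\{(\alpha,f(\alpha))\}$ is the graph of a function $f:\mathbb R\to\mathbb R$. The group property of $G$ makes $f$ additive, and $f(\alpha)<0$ for $\alpha>0$.

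\emph{Step 4 (linearity and conclusion).} An additive function that is negative on the positive reals is strictly decreasing. Indeed, for $\alpha<\alpha'$ we have $f(\alpha')-f(\alpha)=f(\alpha'-\alpha)<0$. A monotone additive function is linear, so $f(\alpha)=-c\alpha$ for some $c>0$; this is the one non-algebraic input, the classical monotone case of the Cauchy equation. Then $G=\{(u,v):cu+v=0\}$, and each orbit $L^{-1}(L(s)+G)$ is a level set of $cu+v$.

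Setting $w=c/(1+c)\in(0,1)$, the function $cu+v$ is a positive multiple of $wu+(1-w)v=\log\Phi$. Conversely, every level set of $\Phi$ in $\mathcal S$ has the form $L^{-1}(p+G)$ for some point $p$, so it is exactly one orbit.

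\emph{Main obstacle.} The delicate points are Step~2 and Step~3. Step~2 requires showing that the rescalings permute orbits, which needs unit invariance in both directions and the fact that orbits never leave $\mathcal S$. Step~3 requires showing that $G$ projects onto the whole first axis, which uses validity invariance to guarantee that $X(s,dx)$ is valid for every $dx\ge0$. Once $G$ is known to be the graph of an additive decreasing function, the rest is routine.
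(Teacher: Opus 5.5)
Your proof is correct, and its overall architecture matches the paper's appendix proof. Log coordinates turn unit invariance into translation invariance (Lemma~\ref{lem:translationInvariantApprox}). The log-orbit of $(1,1)$ is an additive subgroup whose cosets are the orbits (Lemmas~\ref{lem:Hsubgroup} and~\ref{lem:cosetsOfH}). Pareto efficiency keeps it out of the closed first and third quadrants except at $0$ (Lemma~\ref{lem:HantiDiagonal}). A single $X$-swap from $(1,1)$ together with $G=-G$ gives full domain (Lemma~\ref{lem:HfullDomain}).

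The genuine difference is the linearity step. The paper proves Lemma~\ref{lem:subgroupLinear} geometrically. If the subgroup contained two linearly independent vectors, one could place both in the closed fourth quadrant and find integers $n,m$ with $nh_1-mh_2$ in the open first quadrant. So the subgroup lies in a line through the origin, and full domain plus uniqueness then make it the whole line. You instead first make $G$ the graph of a function $f$, deriving uniqueness from the group structure (the vector $(0,\beta-\beta')$ must vanish) rather than from Theorem~\ref{thm:orbitsDecreasing}. You then note that $f$ is additive and negative on $(0,\infty)$, hence strictly decreasing, and invoke the monotone case of Cauchy's equation.

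Both routes rest on the Archimedean property of $\mathbb R$: density of $\mathbb Q$ in yours, the choice of the integer $m$ in the paper's. Yours is shorter and isolates the crux in one classical fact, at the price of citing it. The paper's is self-contained, and the first half of its argument shows, without full domain, that any additive subgroup of $\mathbb R^2$ inside $\{uv\le 0\}$ has at most one-dimensional span. Full domain is used only afterwards, to show the subgroup fills its line.

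One small aside: reflexivity should come from the length-zero chain alone. $X(s,0)=s$ is not an axiom of the model, although Pareto efficiency does force it on $\mathcal S$.
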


We first analyze the orbit through $(1,1)$ in four steps
(Figure~\ref{fig:derivation}), and then extend the argument to arbitrary orbits.

\subsection{The orbit through $(1,1)$}

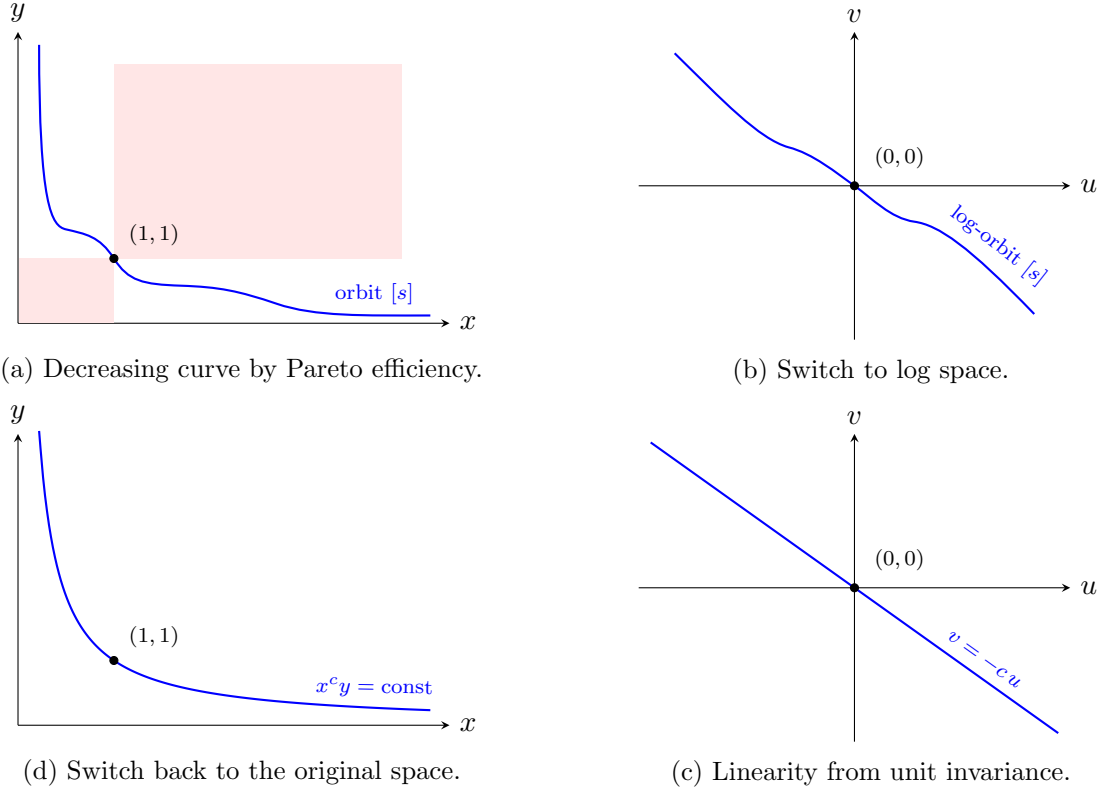
\begin{figure}[H]
  \centering
  \newcommand{\panelw}{0.48\linewidth}
  \newcommand{\panelh}{0.75}% height/width ratio
  %% Row 1: reserve space orbit | log space orbit
  \begin{subfigure}[t]{\panelw}
    \centering
    \begin{tikzpicture}[baseline=(current axis.north)]
      \begin{axis}[
        width=0.95\linewidth, height=\panelh*0.95\linewidth,
        axis lines=middle, xlabel={$x$}, ylabel={$y$},
        xmin=0, xmax=4.5, ymin=0, ymax=4.5,
        xtick=\empty, ytick=\empty, clip=false,
        every axis x label/.style={at={(ticklabel* cs:1)}, anchor=west},
        every axis y label/.style={at={(ticklabel* cs:1)}, anchor=south},
      ]
        \fill[red!10] (1,1) rectangle (4,4);
        \fill[red!10] (0.01,0.01) rectangle (1,1);
        \draw[blue, thick]
          (axis cs:0.22,4.3)
          .. controls (axis cs:0.22,1.8) and (axis cs:0.35,1.5) ..
          (axis cs:0.5,1.45)
          .. controls (axis cs:0.7,1.38) and (axis cs:0.85,1.35) ..
          (axis cs:1.0,1.0)
          .. controls (axis cs:1.15,0.65) and (axis cs:1.3,0.6) ..
          (axis cs:1.7,0.58)
          .. controls (axis cs:2.1,0.56) and (axis cs:2.3,0.5) ..
          (axis cs:2.7,0.3)
          .. controls (axis cs:3.1,0.1) and (axis cs:3.6,0.12) ..
          (axis cs:4.3,0.12);
        \addplot[only marks, mark=*, mark size=1.5pt, black] coordinates {(1,1)};
        \node[anchor=south west, font=\scriptsize] at (1.05,1.05) {$(1,1)$};
        \node[blue, font=\scriptsize, anchor=west] at (3.2,0.45) {orbit $[s]$};
      \end{axis}
    \end{tikzpicture}
    \caption{Decreasing curve by Pareto efficiency.}
    \label{fig:pareto_curve}
  \end{subfigure}
  \hfill
  \begin{subfigure}[t]{\panelw}
    \centering
    \begin{tikzpicture}[baseline=(current axis.north)]
      \begin{axis}[
        width=0.95\linewidth, height=0.78*0.95\linewidth,
        axis lines=middle, xlabel={$u$}, ylabel={$v$},
        xmin=-1.8, xmax=1.8, ymin=-1.8, ymax=1.8,
        xtick={0}, ytick={0}, clip=false, axis on top,
        every axis x label/.style={at={(ticklabel* cs:1)}, anchor=west},
        every axis y label/.style={at={(ticklabel* cs:1)}, anchor=south},
      ]
        \draw[blue, thick]
          (axis cs:-1.5,1.55)
          .. controls (axis cs:-1.1,1.0) and (axis cs:-0.8,0.55) ..
          (axis cs:-0.55,0.45)
          .. controls (axis cs:-0.35,0.38) and (axis cs:-0.15,0.15) ..
          (axis cs:0.0,0.0)
          .. controls (axis cs:0.15,-0.15) and (axis cs:0.3,-0.38) ..
          (axis cs:0.5,-0.42)
          .. controls (axis cs:0.7,-0.46) and (axis cs:0.95,-0.7) ..
          (axis cs:1.5,-1.5);
        \addplot[only marks, mark=*, mark size=1.5pt, black] coordinates {(0,0)};
        \node[anchor=south west, font=\scriptsize] at (0.08,0.08) {$(0,0)$};
        \node[blue, font=\scriptsize, rotate=-38] at (axis cs:1.2,-0.7) {log-orbit $[s]$};
      \end{axis}
    \end{tikzpicture}
    \caption{Switch to log space.}
    \label{fig:log_curve}
  \end{subfigure}

  \vspace{2pt}

  %% Row 2: (c) = origin orbit line (right), (d) = hyperbola (left)
  %% Source order determines subfigure counter: first subfigure = (c), second = (d)
  %% We define (c) first with phantomcaption, then visually place (d) left and (c) right
  \addtocounter{subfigure}{1}% skip (c) for now, assign (c) to the line below
  \begin{subfigure}[t]{\panelw}
    \centering
    \begin{tikzpicture}[baseline=(current axis.north)]
      \begin{axis}[
        width=0.95\linewidth, height=\panelh*0.95\linewidth,
        axis lines=middle, xlabel={$x$}, ylabel={$y$},
        xmin=0, xmax=4.5, ymin=0, ymax=4.5,
        xtick=\empty, ytick=\empty, clip=false,
        every axis x label/.style={at={(ticklabel* cs:1)}, anchor=west},
        every axis y label/.style={at={(ticklabel* cs:1)}, anchor=south},
      ]
        \addplot[blue, thick, smooth, domain=0.22:4.3, samples=100] {1/x};
        \addplot[only marks, mark=*, mark size=1.5pt, black] coordinates {(1,1)};
        \node[anchor=south west, font=\scriptsize] at (1.05,1.05) {$(1,1)$};
        \node[blue, font=\scriptsize, anchor=west, rotate=-1] at (3.0,0.6) {$x^c y=\mathrm{const}$};
      \end{axis}
    \end{tikzpicture}
    \caption{Switch back to the original space.}
    \label{fig:hyperbolas}
  \end{subfigure}
  \hfill
  \addtocounter{subfigure}{-2}% go back to assign (c) to the line
  \begin{subfigure}[t]{\panelw}
    \centering
    \begin{tikzpicture}[baseline=(current axis.north)]
      \begin{axis}[
        width=0.95\linewidth, height=0.78*0.95\linewidth,
        axis lines=middle, xlabel={$u$}, ylabel={$v$},
        xmin=-1.8, xmax=1.8, ymin=-1.8, ymax=1.8,
        xtick={0}, ytick={0}, clip=false, axis on top,
        every axis x label/.style={at={(ticklabel* cs:1)}, anchor=west},
        every axis y label/.style={at={(ticklabel* cs:1)}, anchor=south},
      ]
        \addplot[blue, thick, domain=-1.7:1.7, samples=2] {-1*x};
        \addplot[only marks, mark=*, mark size=1.5pt, black] coordinates {(0,0)};
        \node[anchor=south west, font=\scriptsize] at (0.08,0.08) {$(0,0)$};
        \draw[draw=none] (axis cs:0.5,-0.5) -- (axis cs:1.5,-1.5)
          node[midway, sloped, above, blue, font=\scriptsize] {$v = -c\,u$};
      \end{axis}
    \end{tikzpicture}
    \caption{Linearity from unit invariance.}
    \label{fig:origin_line}
  \end{subfigure}
  \addtocounter{subfigure}{1}% restore counter to (e) position

  \caption{Derivation of the weighted geometric-mean invariant.}
  \label{fig:derivation}
\end{figure}

\textbf{Step~(a): Pareto efficiency.}
Consider the orbit through $(1,1)$.
Pareto efficiency requires that no state on the orbit dominates another:
if $s_1 = (x_1, y_1)$ and $s_2 = (x_2, y_2)$ lie on the same orbit with $x_1 < x_2$, then $y_1 > y_2$.
The orbit is therefore a strictly decreasing curve (Figure~\ref{fig:pareto_curve}).

\medskip\noindent
\textbf{Step~(b): Switch to log space.}
Unit invariance acts multiplicatively: rescaling $(x,y) \mapsto (ax, by)$.
Define the log map $\ell(x,y) := (\log x,\log y)$;
the state $(1,1)$ maps to the origin $(0,0)$.
In log coordinates, rescaling becomes a translation $(u,v) \mapsto (u + \log a, v + \log b)$, which is much easier to analyze.
The orbit through $(0,0)$ is again a strictly decreasing curve (Figure~\ref{fig:log_curve}).

We lift the orbit relation to log space:
$z \approx z' \iff \ell^{-1}(z)\sim \ell^{-1}(z')$.
Unit invariance makes this relation translation-invariant: if $z\approx z'$, then $z+\delta\approx z'+\delta$ for all $\delta\in\mathbb R^2$
(Lemma~\ref{lem:translationInvariantApprox}).

\medskip\noindent
\textbf{Step~(c): Linearity from unit invariance.}
Let $H := \{h\in\mathbb R^2 \mid h\approx 0\}$ be the log-orbit of the origin.
Translation invariance gives $H$ an additive subgroup structure:
$0 \in H$;
$h \in H \Rightarrow -h \in H$;
$h_1, h_2 \in H \Rightarrow h_1 + h_2 \in H$.
But $H$ is also a strictly decreasing curve by Pareto efficiency.
An additive subgroup compatible with the strict monotonicity imposed by Pareto efficiency must be a straight line through the origin, since any genuine bend would contradict closure under addition; see Appendix~\ref{app:fullproofs} for the precise argument.
We conclude that $H$ is a line $v = -c\,u$ with $c > 0$ (Figure~\ref{fig:origin_line}).

\medskip\noindent
\textbf{Step~(d): Return to reserve space.}
Exponentiating $v = -cu$ gives $\log y = -c \log x$, i.e., $x^c y = \mathrm{const}$ (Figure~\ref{fig:hyperbolas}).
Reparameterizing as $w=\frac{c}{1+c}\in(0,1)$ yields the weighted geometric mean $x^w y^{1-w}=\mathrm{const}$.

\subsection{Extend to all orbits}

\begin{figure}[H]
  \centering
  \begin{tikzpicture}
    \begin{axis}[
      width=0.65\linewidth,
      height=0.5\linewidth,
      axis lines=middle,
      xlabel={$u$},
      ylabel={$v$},
      xmin=-1.8, xmax=1.8,
      ymin=-1.8, ymax=1.8,
      xtick={0},
      ytick={0},
      clip=false,
      every axis x label/.style={at={(ticklabel* cs:1)}, anchor=west},
      every axis y label/.style={at={(ticklabel* cs:1)}, anchor=south},
    ]
      % Origin orbit H
      \addplot[blue, thick, domain=-1.7:1.7, samples=2] {-1*x};
      % Another orbit J (parallel, offset +0.7)
      \addplot[red!70!black, thick, domain=-1.7:1.7, samples=2] {-1*x + 0.7};

      % Origin
      \addplot[only marks, mark=*, mark size=2.5pt, black] coordinates {(0,0)};
      \node[anchor=south west, font=\small] at (0.05,0.05) {$0$};

      % Points z1, z2 on J: z1=(-0.5, 1.2), z2=(0.7, 0.0)
      \addplot[only marks, mark=*, mark size=2.5pt, red!70!black] coordinates {(-0.5,1.2) (0.7,0.0)};
      \node[red!70!black, font=\small, anchor=south] at (-0.5,1.25) {$z_1$};
      \node[red!70!black, font=\small, anchor=west] at (0.75,0.0) {$z_2$};

      % z2 - z1 = (1.2, -1.2) on H
      \addplot[only marks, mark=*, mark size=2.5pt, blue] coordinates {(1.2,-1.2)};
      \node[blue, font=\small, anchor=north west] at (1.2,-1.25) {$z_2 - z_1$};

      % Translation arrow from z1 to origin (dashed)
      \draw[->, dashed, gray, thick] (axis cs:-0.5,1.2) -- (axis cs:0.03,0.03);
      \node[gray, font=\scriptsize, anchor=south east] at (-0.15,0.7) {$-z_1$};

      % Translation arrow from z2 to z2-z1 (dashed)
      \draw[->, dashed, gray, thick] (axis cs:0.7,0.0) -- (axis cs:1.17,-1.17);

      % Labels for H and J
      \draw[draw=none] (axis cs:-1.2,1.2) -- (axis cs:-0.2,0.2)
        node[midway, sloped, above, blue, font=\small] {$H$};
      \draw[draw=none] (axis cs:-1.2,1.9) -- (axis cs:-0.2,0.9)
        node[midway, sloped, above, red!70!black, font=\small] {$J$};
    \end{axis}
  \end{tikzpicture}
  \caption{Any orbit $J$ is parallel to $H$.
  Translating by $-z_1$ moves $z_1\in J$ to the origin and $z_2$ to $z_2-z_1$.
  Since translation preserves $\approx$, we get $z_2 - z_1 \in H$,
  so $J$ is a translate of $H$.}
  \label{fig:parallel_lines}
\end{figure}
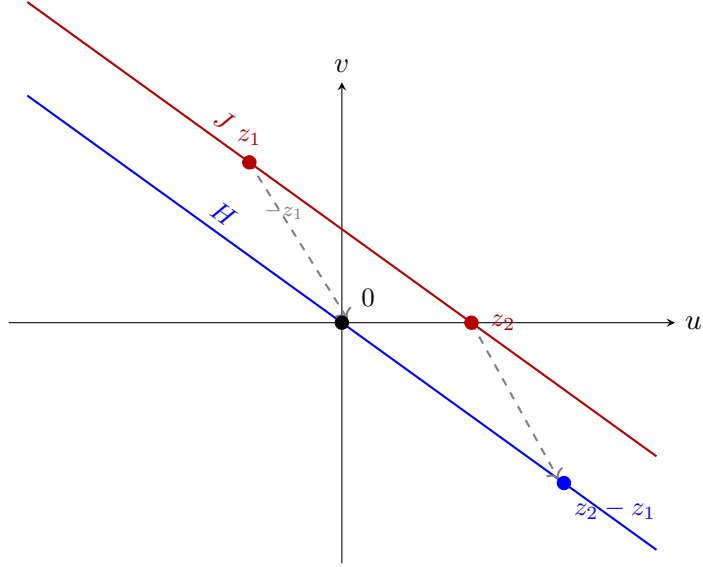

Consider the log orbit $J$ of an arbitrary valid state.
We show that $J$ is a line parallel to $H$.
Take any two points $z_1, z_2 \in J$, so that $z_1 \approx z_2$.
By Lemma~\ref{lem:translationInvariantApprox}, translating the entire space by $-z_1$ shifts $z_1$ to the origin and $z_2$ to $z_2 - z_1$.
Since translation preserves the orbit relation, $0 \approx (z_2 - z_1)$, which means $(z_2 - z_1) \in H$.
Thus, the difference between any two points in $J$ always lies in $H$.
Because $H$ is a line through the origin, $J$ must be a parallel translate of $H$: $J = z_0 + H$ for any fixed $z_0 \in J$.

Applying this argument to every orbit reveals a clean geometric picture: all trading orbits in log space are parallel lines with the same slope $-c$ (Figure~\ref{fig:parallel_lines}).
Different orbits differ only by their intercept.
Exponentiating back to reserve space, each line $cu + v = \mathrm{const}$ becomes $x^c y = \mathrm{const}$, and reparameterizing as $w = c/(1+c)$ gives $x^w y^{1-w} = \mathrm{const}$ on every orbit.

\subsection{Role of the Axioms}

The three axioms play distinct roles in the argument. Validity invariance keeps the orbit relation internal to the positive state space, so that trading orbits are genuinely geometric objects in $\mathcal S$ and can be transported to log space. Pareto efficiency forces each orbit to be a strictly decreasing tradeoff curve rather than a higher-dimensional region; without it, one can have mechanisms whose symmetrized trading orbits fill the entire positive quadrant. Unit invariance is what rigidifies these decreasing curves into weighted geometric-mean level sets: in log coordinates it becomes translation invariance, which turns the orbit through $(1,1)$ into an additive subgroup of $\mathbb R^2$, and Pareto efficiency then forces that subgroup to be a line. Thus Pareto efficiency and unit invariance drive the classification, while validity invariance ensures that the classification is taking place on the intended positive-reserve state space.

\subsection{Derivation of Constant Product}

Token symmetry is the principle that relabeling the two assets should not change
the mechanism.

\begin{assumption}[Token relabeling symmetry]\label{ass:tokenSymmetry}
Let $\tau(x,y)=(y,x)$. For all valid states $s,t$,
\[
s\sim t \quad\Longleftrightarrow\quad \tau(s)\sim \tau(t).
\]
\end{assumption}

\begin{corollary}[Constant Product]\label{cor:constantProductDirect}
Assume the hypotheses of Theorem~\ref{thm:main} and token relabeling
symmetry (Assumption~\ref{ass:tokenSymmetry}). Then $w=\tfrac12$ and every orbit
satisfies
\[
x y=\mathrm{const}.
\]
\end{corollary}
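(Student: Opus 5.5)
We will derive Corollary~\ref{cor:constantProductDirect} directly from Theorem~\ref{thm:main}: the theorem supplies some $w\in(0,1)$ such that, for valid $s,t$, we have $s\sim t$ if and only if $\Phi(s)=\Phi(t)$ with $\Phi(x,y)=x^wy^{1-w}$. Token relabeling symmetry (Assumption~\ref{ass:tokenSymmetry}) then says that $\Phi(s)=\Phi(t)$ holds exactly when $\Phi(\tau s)=\Phi(\tau t)$. Writing $\Psi(x,y):=\Phi(y,x)=x^{1-w}y^{w}$, we get that $\Phi$ and $\Psi$ have the same level-set partition of $\mathcal S$. The plan is to show that this forces $w=1-w$, and then to rewrite the level sets.

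To carry this out, we pass to log coordinates as in the proof sketch. The orbit of $(1,1)$ becomes the line $wu+(1-w)v=0$ under $\Phi$. Under $\Psi$ the orbit of $\tau(1,1)=(1,1)$ becomes the line $(1-w)u+wv=0$. Since the two partitions agree, these lines coincide, so their normal vectors $(w,1-w)$ and $(1-w,w)$ are proportional. That gives $w^2=(1-w)^2$, and because $w\in(0,1)$ we obtain $w=\tfrac12$.

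To avoid log coordinates, we can instead pick the explicit test pair $s=(1,1)$ and $t=(2,2^{-w/(1-w)})$. Then $\Phi(s)=\Phi(t)$, so $s\sim t$, and symmetry gives $\tau s\sim\tau t$, that is, $\Phi(1,1)=\Phi(2^{-w/(1-w)},2)$. This equation reads $2^{-w^2/(1-w)+(1-w)}=1$. Hence $(1-w)^2=w^2$, and again $w=\tfrac12$.

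With $w=\tfrac12$ each orbit is a level set of $\sqrt{xy}$, equivalently $xy=\mathrm{const}$, since squaring is injective on positive reals. There is no real obstacle. The only care needed is the logical direction: we must use the statement that the orbits are \emph{exactly} the level sets, so that $\Phi(s)=\Phi(t)$ implies $s\sim t$. The mere inclusion of orbits in level sets would not suffice to produce the related pair $(s,t)$ used above.
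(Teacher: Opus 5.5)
Your proposal is correct and takes essentially the paper's approach. The paper also passes to log space and notes that the coordinate swap sends the parallel family of slope $-c$ to the family of slope $-1/c$, so invariance of the orbit partition forces $c=1/c$, i.e.\ $w=\tfrac12$. Comparing the two level lines through the swap-fixed point $(1,1)$, or using the explicit test pair, is the same idea, with the step from the relation-level symmetry assumption to equality of those lines spelled out more carefully.
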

\begin{proof}
In log space, $\tau$ acts as coordinate swap $(u,v)\mapsto(v,u)$.
A line $v=-c u + d$ is sent to $u=-c v + d$, equivalently $v=-(1/c)u + d/c$.
Token symmetry requires the orbit partition to be invariant under this swap,
so the slope parameter must satisfy $c=1/c$, hence $c=1$.
Therefore $w=\frac{c}{1+c}=\frac12$, and $\Phi(x,y)=\sqrt{xy}$, so $xy$ is
constant on orbits.
\end{proof}

\subsection{Extension to Multiple Assets}\label{sec:multiasset}

The two-asset classification extends to pools with arbitrarily many tokens.
Given a multi-asset AMM, restricting any of its global trading orbits to a
coordinate $2$-slice (fixing all reserves except a chosen pair $(i,j)$)
yields a \emph{two-token restriction}, i.e.\ the intersection of the
global orbit with the slice.
The global orbit structure is determined by local conditions on these
restrictions: if every two-token restriction of the AMM satisfies the
two-asset axioms, then a single invariant $\prod x_i^{w_i}$ governs all
trades, and conversely.
We now make this precise.

A state is a vector $s=(x_1,\ldots,x_n)\in\mathbb{R}^n$ with valid
state space
$\mathcal{S}=\{s\in\mathbb{R}^n : x_i>0 \text{ for all } i\}$.
For each pair of distinct tokens $(i,j)$, a swap primitive
$X_{ij}(s,dx)$ inputs $dx\ge 0$ units of token~$i$ and returns
token~$j$:
\[
  (x_i,\,x_j)
  \;\mapsto\;
  \bigl(x_i + dx,\; x_j - y^{ij}_{\mathrm{out}}(s,dx)\bigr),
\]
with all other reserves unchanged.
Adjacency and orbits extend from the two-asset case
(Definitions~\ref{def:adjacency}--\ref{def:orbit}), with the orbit
relation generated by all pairwise swaps.

\begin{definition}[Coordinate two-token slice]
For a valid state $p\in\mathcal S$ and a pair $1\le i<j\le n$, define
\[
  \Pi_{ij}(p):=\{x\in\mathbb R^n : x_k=p_k \text{ for all } k\neq i,j\}.
\]
\end{definition}

\begin{definition}[Two-token $(i,j)$ restriction]
For a valid state $p\in\mathcal S$ and a pair $1\le i<j\le n$, the
\emph{two-token $(i,j)$ restriction through $p$} is the two-asset AMM
on the slice $\Pi_{ij}(p)$ whose swap primitives are the $(i,j)$-swaps
$X_{ij}$ and $X_{ji}$ of the global AMM. Its trading orbit through $p$ is
\[
  [p]_{ij}:=[p]\cap \Pi_{ij}(p).
\]
\end{definition}

\begin{theorem}[Multi-asset orbit classification]\label{thm:multiasset}
The trading orbits of an $n$-asset AMM are exactly the level sets of
a weighted product
\[
  \Phi(x_1,\ldots,x_n)=\prod_{i=1}^{n} x_i^{w_i},
  \qquad w_i>0,\quad \textstyle\sum_{i} w_i=1,
\]
if and only if, for every valid state $p\in\mathcal S$ and every pair
$1\le i<j\le n$, the two-token restriction $[p]_{ij}$, viewed in the
coordinates $(x_i,x_j)$, satisfies the two-asset axioms
(Assumptions~\ref{ass:validity}--\ref{ass:unit}).
\end{theorem}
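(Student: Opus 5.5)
Both directions pass to log coordinates $z=\ell(x)=(\log x_1,\ldots,\log x_n)$. Each slice $\Pi_{ij}(p)$ becomes an affine coordinate $2$-plane, and in log space the condition reads $\sum_i w_i z_i=\mathrm{const}$.

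\emph{Easy direction.} Assume every orbit is a level set of $\Phi=\prod_i x_i^{w_i}$. Then $[p]_{ij}$ is the set of points of the slice with $x_i^{w_i}x_j^{w_j}=\mathrm{const}$, the constant absorbing the frozen coordinates. Normalizing $w=w_i/(w_i+w_j)$, this is a level set of $x_i^{w}x_j^{1-w}$. By the converse remark after Theorem~\ref{thm:main}, such a two-asset structure satisfies Assumptions~\ref{ass:validity}--\ref{ass:unit}.

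\emph{Hard direction, step 1: log orbits are translation-invariant and pairwise linear.} Apply Theorem~\ref{thm:main} to each restriction $[p]_{ij}$. In log space, $\ell([p]_{ij})$ is a line in the $(z_i,z_j)$-plane through $\ell(p)$ with slope $-c_{ij}(p)$, where $c_{ij}(p)>0$. I will first show that $c_{ij}$ does not depend on $p$. Theorem~\ref{thm:main} applied to the restriction through $p$ shows that \emph{all} orbits of that two-asset AMM on the slice are parallel lines. However, different slices $\Pi_{ij}(p)$ (different frozen coordinates) could a priori carry different slopes.

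To rule this out, I will lift translation invariance to the global relation $\approx$ in the way Lemma~\ref{lem:translationInvariantApprox} does. A global translation $\delta$ changes only the coordinates it moves. For a $\delta$ supported on frozen coordinates, I expect to need the hypothesis that unit invariance holds for each restriction with rescaling of all coordinates. I would interpret the hypothesis this way, since the $(i,j)$-swaps act on the full state. Then $H=\{h:h\approx 0\}$ is an additive subgroup of $\mathbb R^n$, and every log orbit is a coset $z_0+H$.

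\emph{Step 2: the subgroup is a hyperplane.} This is the main obstacle, namely the lemma that a subgroup $H\subseteq\mathbb R^n$ whose every coordinate $2$-slice through each of its points is a line of negative slope is a hyperplane with positive normal. Because $H$ is a subgroup, all slices through points of $H$ share the slopes $c_{ij}$ of the slices through $0$. Those slices give the vectors $v_{ij}=e_i-c_{ij}e_j\in H$, together with all their real multiples. Consequently $H$ contains the real span $V$ of $\{v_{1j}\}_{j\ge2}$, a hyperplane with normal $w$ satisfying $w_1=1$ and $w_j=1/c_{1j}>0$.

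For the reverse inclusion, suppose $h\in H\setminus V$. Adjusting $h$ by elements of $V$, which lie in $H$, I obtain a nonzero multiple of $e_1$ in $H$. This contradicts the fact that the $(1,2)$-slice through $0$ meets the $z_1$-axis only at $0$, i.e.\ Pareto efficiency. Hence $H=V$, and consistency of the slopes gives $c_{ij}=w_i/w_j$ automatically.

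\emph{Step 3: back to reserve space.} Each orbit is $z_0+V=\{z:\sum_i w_i z_i=\mathrm{const}\}$. Conversely, every point of such a set lies in the same orbit, since it lies in the same coset of $H$. I then rescale $w$ so that $\sum_i w_i=1$ and exponentiate, which gives the level sets of $\prod_i x_i^{w_i}$.
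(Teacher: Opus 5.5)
You spot the crux correctly in Step~1: slopes on different slices $\Pi_{ij}(p)$ might differ. But you resolve it by strengthening the hypothesis rather than proving it, and Steps~2--3 rest entirely on that. As stated, unit invariance of the $(i,j)$-restriction through $p$ only concerns rescaling $(x_i,x_j)$ inside the fixed slice $\Pi_{ij}(p)$, with all other reserves held at $p_k$. It says nothing about how $X_{ij}$ behaves when some frozen reserve $x_k$, $k\notin\{i,j\}$, is rescaled. So a swap chain containing an $(i,j)$-step cannot be transported by a log-translation in direction $e_k$, and you do not get translation invariance of $\approx$ on $\mathbb R^n$.

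Without that invariance, $H$ is not known to be a subgroup, orbits are not known to be cosets of $H$, and $c_{ij}(p)$ could a priori vary with the frozen coordinates. Ruling that out is precisely the content of the hard direction. Global unit invariance does hold a posteriori: once orbits are level sets of $\prod_i x_i^{w_i}$, the swap outputs are forced and homogeneous. But it holds only as a consequence of the theorem, so reading it into the hypothesis assumes what needs to be shown.

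The paper closes this step without any group structure. It applies Lemma~\ref{lem:slicesToHyperplane} to a single log-orbit $S$, regarded as an arbitrary subset of $\mathbb R^n$:
\begin{itemize}
\item Walking successively along $(x_i,x_n)$-slices shows that $S$ is the graph of a function $f:\mathbb R^{n-1}\to\mathbb R$.
\item The slice hypothesis makes $f$ affine in each variable separately, hence multi-affine.
\item Each mixed coefficient vanishes, since a nonzero $uv$-term would turn some $(x_i,x_j)$-slice into a (possibly degenerate) hyperbola rather than a single line of negative slope.
\end{itemize}
This gives a hyperplane with positive normal for every orbit from slice-wise information alone.

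Your Step~2 (the span of the $v_{1j}$ together with the $e_1$-contradiction) is correct and much shorter than the multi-affine argument once $H$ is known to be a subgroup. It would also give a common normal for all orbits for free. To use it on the theorem as stated, however, you first need something like the paper's lemma to establish that the slice slopes do not depend on the frozen coordinates.
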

\begin{proof}[Proof sketch]
The idea is to pass to log coordinates $u_i=\log x_i$, where weighted
products become affine equations and the geometry of the orbit slices
becomes linear.

\smallskip\noindent
\textbf{Backward direction.}
If the global orbits are level sets of
$\Phi(x)=\prod_i x_i^{w_i}$, then in log coordinates they are affine
hyperplanes
\[
  w_1u_1+\cdots+w_nu_n=\mathrm{const}.
\]
Intersecting such a hyperplane with a coordinate slice
$\Pi_{ij}(p)$ gives an affine line in the $(u_i,u_j)$-plane with
slope $-w_i/w_j<0$ (since $w_i,w_j>0$). Exponentiating back, each two-token restriction
$[p]_{ij}$ is a weighted geometric-mean curve
$x_i^{w_i}x_j^{w_j}=\mathrm{const}$, so by Theorem~\ref{thm:main} it
satisfies the two-asset axioms.

\smallskip\noindent
\textbf{Forward direction.}
Conversely, assume every two-token restriction $[p]_{ij}$ satisfies the
two-asset axioms. Then by the two-asset classification, every such
slice is a weighted geometric-mean curve in $(x_i,x_j)$, hence a line
of negative slope in log coordinates. Thus every coordinate $2$-slice
of every log-orbit is a negative-slope affine line. The key geometric
fact, proved in Lemma~\ref{lem:slicesToHyperplane}, is that a subset of
$\mathbb R^n$ with this property must itself be an affine hyperplane
$a_1u_1+\cdots+a_nu_n=d$ with all coefficients $a_i>0$.
Setting $w_i=a_i/\!\sum_k a_k$ and exponentiating back, each orbit is
a level set of a weighted product $\prod_i x_i^{w_i}$ with $\sum_i w_i=1$.
Since parallel log-hyperplanes share the same normal, the same weights
work for every orbit.
\end{proof}

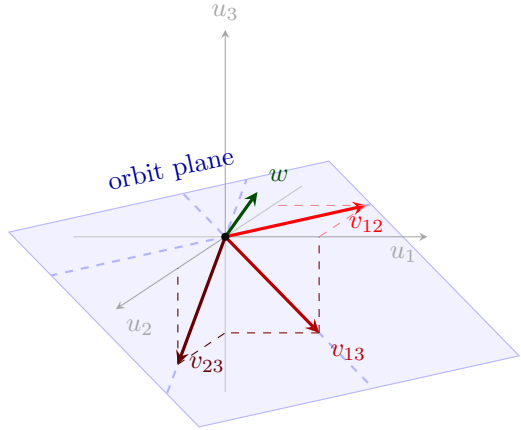
\begin{figure}[H]
  \centering
  \begin{tikzpicture}[
    x={(0.95cm,0cm)},        % u1 axis
    y={(-0.48cm,-0.32cm)},   % u2 axis
    z={(0cm,0.98cm)},        % u3 axis
    >=stealth
  ]
    % --- Orbit plane (shaded) ---
    % The plane is spanned by v12 and v13 from the origin.
    % We'll draw a parallelogram-ish region.
    % Use w=(1,1,1) for equal weights (symmetric case for clarity),
    % so the plane is u1+u2+u3=0, and
    %   v12 = (2/3, -1/3, 0) [rescaled to (2,-1,0)]
    %   v13 = (2/3, 0, -1/3) [rescaled to (2,0,-1)]
    %   v23 = (0, 2/3, -1/3) [rescaled to (0,2,-1)]
    % For visual clarity, use a=1/2 so directions are (1,-1,0), (1,0,-1), (0,1,-1)

    \def\sc{1.55}
    % Plane patch with edge directions parallel to
    % v13=(1,0,-1) and v12=(1,-1,0).
    \coordinate (P1) at (-1.9,{0.95*\sc+0.7},{0.95*\sc-0.7});
    \coordinate (P2) at ({-1.9+1.7*\sc},{0.95*\sc+0.7},{0.95*\sc-0.7-1.7*\sc});
    \coordinate (P3) at ({-1.9+3.6*\sc},{0.95*\sc+0.7-1.9*\sc},{0.95*\sc-0.7-1.7*\sc});
    \coordinate (P4) at ({-1.9+1.9*\sc},{0.95*\sc+0.7-1.9*\sc},{0.95*\sc-0.7});

    \fill[blue!8, opacity=0.7] (P1) -- (P2) -- (P3) -- (P4) -- cycle;
    \draw[blue!30, thin] (P1) -- (P2) -- (P3) -- (P4) -- cycle;

    % --- Axes ---
    \draw[gray!45] (0,0,0) -- (-2.1,0,0);
    \draw[gray!45] (0,0,0) -- (0,-2.1,0);
    \draw[gray!45] (0,0,0) -- (0,0,-2.1);
    \draw[->, gray!70] (0,0,0) -- (2.8,0,0) node[anchor=north east, font=\small]{$u_1$};
    \draw[->, gray!70] (0,0,0) -- (0,3,0) node[anchor=north west, font=\small]{$u_2$};
    \draw[->, gray!70] (0,0,0) -- (0,0,2.8) node[anchor=south, font=\small]{$u_3$};

    % --- Intersections with the coordinate planes ---
    % Draw these as dashed reverse continuations of the swap directions,
    % truncated at the visible plane patch.
    \begin{scope}
      \clip (P1) -- (P2) -- (P3) -- (P4) -- cycle;
      % (u1,u2)-plane: u3=0
      \draw[blue!30, dashed, line width=0.8pt] (-3.2,3.2,0) -- (3.2,-3.2,0);
      % (u1,u3)-plane: u2=0
      \draw[blue!30, dashed, line width=0.8pt] (-3.2,0,3.2) -- (3.2,0,-3.2);
      % (u2,u3)-plane: u1=0
      \draw[blue!30, dashed, line width=0.8pt] (0,-3.2,3.2) -- (0,3.2,-3.2);
    \end{scope}

    % --- Swap direction vectors ---
    % v12: changes u1 up, u2 down (direction (1,-1,0))
    \draw[->, red, very thick] (0,0,0) -- (1.3,-1.3,0)
      node[anchor=north, font=\small]{$v_{12}$};
    % v13: changes u1 up, u3 down (direction (1,0,-1))
    \draw[->, red!70!black, very thick] (0,0,0) -- (1.3,0,-1.3)
      node[anchor=north west, font=\small]{$v_{13}$};
    % v23: changes u2 up, u3 down (direction (0,1,-1))
    \draw[->, red!40!black, very thick] (0,0,0) -- (0,1.3,-1.3)
      node[anchor=west, font=\small]{$v_{23}$};

    % --- Normal vector w ---
    \draw[->, black!70!green, very thick] (0,0,0) -- (0.9,0.9,0.9)
      node[anchor=south west, font=\small]{$w$};

    % --- Dashed projection lines showing each vector's coordinate plane ---
    % v12 = (1.3, -1.3, 0) lies in (u1, u2) plane
    \draw[red!50, dashed, thin] (1.3,-1.3,0) -- (1.3,0,0);
    \draw[red!50, dashed, thin] (1.3,-1.3,0) -- (0,-1.3,0);
    % v13 = (1.3, 0, -1.3) lies in (u1, u3) plane
    \draw[red!50!black, dashed, thin] (1.3,0,-1.3) -- (1.3,0,0);
    \draw[red!50!black, dashed, thin] (1.3,0,-1.3) -- (0,0,-1.3);
    % v23 = (0, 1.3, -1.3) lies in (u2, u3) plane
    \draw[red!30!black, dashed, thin] (0,1.3,-1.3) -- (0,1.3,0);
    \draw[red!30!black, dashed, thin] (0,1.3,-1.3) -- (0,0,-1.3);

    % --- Origin dot ---
    \fill[black] (0,0,0) circle (1.5pt);

    % --- Labels ---
    \node[blue!60!black, font=\small, anchor=south, rotate=12] at (0.15,1.65,1.15)
      {orbit plane};
  \end{tikzpicture}
  \caption{Three-asset AMM in log space, showing the orbit through
    the origin (corresponding to the state $(1,1,1)$).
    The pairwise swap directions $v_{12},v_{13},v_{23}$ (red) all lie
    in a common plane.
    The weight vector $w$ (green) is normal to this plane.
    Every other orbit is a parallel translate of this plane,
    characterized by $\sum w_i u_i = \mathrm{const}$.}
  \label{fig:multiasset_logspace}
\end{figure}

\begin{corollary}[Multi-asset constant product]\label{cor:multiassetCP}
Under the hypotheses of Theorem~\ref{thm:multiasset}, if additionally
the orbit structure is invariant under every permutation of tokens,
then $w_i = 1/n$ for all~$i$ and the orbits satisfy
$\prod_{i=1}^n x_i = \mathrm{const}$.
\end{corollary}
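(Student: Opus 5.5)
By Theorem~\ref{thm:multiasset}, there are weights $w_i>0$ with $\sum_i w_i=1$ such that the trading orbits are exactly the level sets of $\Phi(x)=\prod_i x_i^{w_i}$. The plan is to pass to log coordinates $u_i=\log x_i$, where each orbit becomes an affine hyperplane $\sum_i w_i u_i=d$. A permutation $\sigma$ of the tokens acts on states by $(\sigma x)_i = x_{\sigma^{-1}(i)}$, and on log space by the same coordinate permutation, which is linear. Permutation invariance of the orbit structure means that $s\sim t$ if and only if $\sigma(s)\sim\sigma(t)$. Hence $\sigma$ maps each log-orbit onto some log-orbit, so it permutes the family of parallel hyperplanes with normal $w=(w_1,\dots,w_n)$.

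The key step is to compare normals. The image of the hyperplane $\{u:\langle w,u\rangle=d\}$ under the coordinate permutation $P_\sigma$ is $\{u:\langle P_\sigma w,u\rangle=d\}$, because $P_\sigma$ is orthogonal. This image must itself be an orbit, namely a hyperplane $\{\langle w,u\rangle=d'\}$. Two affine hyperplanes coincide only if their normals are proportional, so $P_\sigma w=\lambda w$ for some $\lambda\neq 0$. Summing coordinates gives $\sum_i w_{\sigma^{-1}(i)}=\lambda\sum_i w_i$, and the left side equals $\sum_i w_i=1$, so $\lambda=1$. Thus $w_{\sigma^{-1}(i)}=w_i$ for every permutation $\sigma$. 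Applying this to transpositions $(i\,j)$ gives $w_i=w_j$ for all $i,j$, and the normalization forces $w_i=1/n$.

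Finally, the orbits are the level sets of $\prod_i x_i^{1/n}=(\prod_i x_i)^{1/n}$. Since $t\mapsto t^n$ is strictly increasing on $(0,\infty)$, these coincide with the level sets of $\prod_i x_i$, so $\prod_{i=1}^n x_i=\mathrm{const}$ on every orbit. The only mildly delicate point is the translation from invariance of the orbit relation to invariance of the hyperplane family. This uses the fact that every orbit is an entire level set, so its image under $\sigma$ is exactly another orbit rather than merely a subset of one. Applying invariance in both directions (using $\sigma$ and $\sigma^{-1}$) rules out the image being a proper subset. Alternatively, one can use only the transpositions $(i\,j)$ and run the two-asset argument of Corollary~\ref{cor:constantProductDirect} on the $(u_i,u_j)$-slices: the slice slope $-w_i/w_j$ must equal its reciprocal, so $w_i=w_j$.
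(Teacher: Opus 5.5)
Your proposal is correct and follows the paper's argument. A token permutation replaces $\prod_i x_i^{w_i}$ by the invariant with permuted weights, and invariance of the orbit partition forces the two to have the same level sets. Hence $w$ is fixed by every permutation and $w_i=1/n$. Your log-space step (coinciding hyperplanes have proportional normals, and $\sum_i w_i=1$ forces the factor $\lambda=1$) just makes explicit the step the paper asserts in one line.
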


In practice, the invariant $\Phi=\prod x_i^{w_i}$ defines admissible
trades: a state transition from $s$ to~$t$ is a valid fee-free trade
whenever $\Phi(t)=\Phi(s)$.
This includes pairwise swaps, but also multi-token trades such as
depositing several tokens and withdrawing one, or rebalancing across
multiple tokens in a single transaction.
A trade routed through an intermediate token (e.g., swapping $X_1$ for
$X_2$ via~$X_3$) is a chain of pairwise swaps and therefore also
preserves~$\Phi$.

%%%%%%%%%%%%%%%%%%%%%%%%%%%%%%%%%%%%%%%%%%%%%%%%%%%%%%%%%%%%%%%%%%%%%%%%%%%%%%%%

\section{Extension to Fees and Liquidity Operations}\label{sec:fees_liquidity}

Theorems~\ref{thm:main} and~\ref{thm:multiasset} classify \emph{fee-free} swap orbits.
Real-world AMMs also include trading fees and liquidity operations.
These do not invalidate the geometric picture; rather, the partition of the state space into trading orbits provides a natural reference frame:
swaps move \emph{along} orbits, while fees and liquidity operations move the state \emph{across} them.

\medskip\noindent
\textbf{Fees.}\quad
In many AMMs, trading fees are charged on the input token.
For an $X$-in trade of gross size $dx\ge 0$ with fee rate $\phi\in[0,1)$,
only $(1-\phi)\,dx$ is used for pricing, while the remainder $\phi\,dx$ is retained in the 
pool reserves.
Thus a fee-paying swap can be viewed as a fee-free swap with effective 
input $(1-\phi)\,dx$ along the current orbit, followed by a one-sided reserve injection
\[
(x,y)\mapsto (x+\phi\,dx,\,y),
\]
which moves the state across orbits.
This matches Uniswap~v2, where the output $dy$ satisfies
\[
(x+(1-\phi)\,dx)(y-dy)=xy,
\]
while the full input $dx$ is added to the $X$-reserve.
Note that the order matters: reversing it, by injecting the fee first and then swapping on 
the shifted orbit, yields a different output.

\medskip\noindent
\textbf{Liquidity operations.}\quad
Proportional liquidity addition or removal scales the state as
$(x,y)\mapsto (\lambda x,\lambda y)$ with $\lambda>0$.
Since $\Phi_w(\lambda x,\lambda y) = \lambda\,\Phi_w(x,y)$, such operations move the state between orbits.
In log space this becomes the diagonal translation
$(u,v)\mapsto (u+\log\lambda,\, v+\log\lambda)$.
More generally, an asymmetric reserve scaling
$(x,y)\mapsto(\lambda_x x,\,\lambda_y y)$
changes the invariant by the factor $\lambda_x^{\,w}\lambda_y^{\,1-w}$:
\[
\Phi_w(\lambda_x x,\lambda_y y)
=
\lambda_x^{\,w}\lambda_y^{\,1-w}\,\Phi_w(x,y).
\]
This includes single-sided reserve additions as a special case.

\medskip
In summary: fee-free swaps preserve trading orbits, retained input-token fees induce transverse drift across orbits, 
and proportional liquidity operations produce diagonal translations in log space, hence moving the state across orbits.

%%%%%%%%%%%%%%%%%%%%%%%%%%%%%%%%%%%%%%%%%%%%%%%%%%%%%%%%%%%%%%%%%%%%%%%%%%%%%%%%
%%%%%%%%%%%%%%%%%%%%%%%%%%%%%%%%%%%%%%%%%%%%%%%%%%%%%%%%%%%%%%%%%%%%%%%%%%%%%%%%
\section{Lean Formalization}
\label{app:lean_v2}

The results of this paper have been formalized and machine-checked
in Lean~4, using Mathlib for the underlying real analysis.
The source code is available at~\cite{AMMlean}.

For this project, the role of Lean was not only to certify the final
proofs, but also to help shape the paper itself.
Because the paper develops an axiomatic classification theorem, this was
particularly valuable.
The formalization made it transparent which assumptions were genuinely
needed, helped us formulate them more precisely, and provided a
reliable environment for testing variations of the theory.

In an informal development, it is easy to retain assumptions that were
useful in an earlier proof attempt or that simply feel natural.
In Lean, by contrast, every theorem is proved under an explicit list of
hypotheses, so weakening the setup immediately reveals which later steps
break and which do not.
This helped us reduce the number of assumptions in the final
presentation.
For example, an earlier version of the project included an additional
path-independence type assumption, but the formalization showed that it
was not needed for the main classification result.
More generally, once a compiling development was in place, it became
easy to experiment with modified assumptions and alternative proof
structures while keeping full control over the dependency structure of
the argument.

The formalization also led to greater precision in the formulation of
the axioms.
Several statements that sound clear informally had to be made fully
explicit once encoded in Lean.
A representative example is the treatment of validity and adjacency.
At an informal level, one might simply say that swaps ``preserve
validity'' and fold validity directly into the definition of adjacency.
In the formal development, however, we adopted a more careful design
choice.
We modeled validity as a separate predicate on states, while adjacency
records only the existence of a primitive swap.
This made clear that the validity assumption is best formulated in a
way that guarantees that validity is preserved along swap chains,
rather than being left implicit in the definition of the orbit relation.

\section{Conclusion}

The main takeaway is that a minimal set of natural assumptions already
forces a very rigid orbit geometry. In the two-asset case, validity
invariance, Pareto efficiency, and unit invariance force the trading
orbits of a fee-free AMM to be weighted geometric-mean curves
$x^w y^{1-w}=\mathrm{const}$, with constant product recovered under
token symmetry.

The same perspective extends to the multi-asset setting: once every
two-token restriction satisfies the two-asset axioms, the global
trading orbits are forced to be level sets of a weighted product
$\prod_i x_i^{w_i}$ with positive weights $w_i$ summing to 1.
Beyond these classification results themselves,
the Lean formalization was valuable in identifying a minimal set of
assumptions and in guiding the argument toward simpler and more robust
proofs.

%%%%%%%%%%%%%%%%%%%%%%%%%%%%%
\bibliographystyle{plainurl}
\bibliography{references}

\clearpage
%%%%%%%%%%%%%%%%%%%%%%%%%%%%%%%%%%%%%%%%%%%%%%%%%%%%%%%%%%%%%%%%%%%%%%%%%%%%%%%%
\appendix

\section{Full Mathematical Proofs}
\label{app:fullproofs}

\subsection{Basic properties of orbits}

\begin{lemma}[Orbit relation is an equivalence relation]\label{lem:orbitEquiv}
The relation $\sim$ of Definition~\ref{def:orbit} is an equivalence relation on $\mathbb R^2$.
\end{lemma}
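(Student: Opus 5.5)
We verify the three defining properties of an equivalence relation directly from Definition~\ref{def:orbit}, using only that the adjacency relation $\leftrightarrow$ of Definition~\ref{def:adjacency} is symmetric. The plan is to treat $\sim$ as the reflexive--transitive closure of $\leftrightarrow$, realized concretely by finite chains $s=s_0,s_1,\ldots,s_n=t$.

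\emph{Reflexivity.} For any $s\in\mathbb R^2$ we take the chain of length $n=0$, namely $s_0=s$. The condition ``$s_{i-1}\leftrightarrow s_i$ for all $i=1,\ldots,n$'' is then vacuous, so $s\sim s$. If one prefers to avoid empty chains, the length-one chain $s_0=s_1=s$ also works: $s=X(s,0)$ holds provided $y_{\mathrm{out}}(s,0)=0$. That is not guaranteed by the axioms, so I will rely on the vacuous chain, which Definition~\ref{def:orbit} permits since it allows any $n\ge 0$.

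\emph{Symmetry.} Given a chain $s=s_0,\ldots,s_n=t$, we reverse it to $t=s_n,s_{n-1},\ldots,s_0=s$. Each consecutive pair $(s_i,s_{i-1})$ is adjacent because $\leftrightarrow$ is symmetric: the four disjuncts in Definition~\ref{def:adjacency} are closed under exchanging $s$ and $t$. Hence $t\sim s$.

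\emph{Transitivity.} Given chains from $s$ to $t$ of length $n$ and from $t$ to $r$ of length $m$, we concatenate them at the shared endpoint $t$. This produces a chain of length $n+m$ from $s$ to $r$ in which every consecutive pair is adjacent.

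There is no real obstacle here. The only point requiring care is reflexivity, because swap primitives with zero input need not fix a state; the empty chain sidesteps this issue.
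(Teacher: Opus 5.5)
Your proof is correct and matches the paper's argument: the zero-step chain for reflexivity, chain reversal using the symmetry of $\leftrightarrow$ for symmetry, and concatenation for transitivity. Your aside about $X(s,0)$ is slightly overcautious, since on valid states Pareto efficiency does force $X(s,0)=s$; but the lemma is about all of $\mathbb R^2$ and assumes no axioms, so relying on the empty chain is the right choice.
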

\begin{proof}
Reflexivity holds by the zero-step chain $s\sim s$.
Symmetry holds because adjacency $\leftrightarrow$ is symmetric by Definition~\ref{def:adjacency}.
Transitivity holds by concatenating chains.
\end{proof}

\begin{lemma}[Adjacency preserves validity]\label{lem:adjPreservesValidity}
Under Assumption~\ref{ass:validity}, if $s\leftrightarrow t$, then
\[
s\in\mathcal S \iff t\in\mathcal S.
\]
\end{lemma}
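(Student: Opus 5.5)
The plan is a case split on which of the four disjuncts in Definition~\ref{def:adjacency} witnesses $s\leftrightarrow t$, followed by a direct appeal to Assumption~\ref{ass:validity} in each case. By definition there is some $\delta\ge 0$ with $t=X(s,\delta)$, $t=Y(s,\delta)$, $s=X(t,\delta)$, or $s=Y(t,\delta)$. Assumption~\ref{ass:validity} holds for every state in $\mathbb R^2$ and every nonnegative input. This is why the swap primitives were defined on all of $\mathbb R^2$.

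In the first case, $t=X(s,\delta)$. Assumption~\ref{ass:validity} with state $s$ and input $dx=\delta$ gives $X(s,\delta)\in\mathcal S\iff s\in\mathcal S$. Rewriting $X(s,\delta)$ as $t$ yields $t\in\mathcal S\iff s\in\mathcal S$. The second case, $t=Y(s,\delta)$, is identical, using the $Y$-half of the assumption.

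In the third and fourth cases the roles of $s$ and $t$ are exchanged. If $s=X(t,\delta)$, the assumption applied at the state $t$ gives $s\in\mathcal S\iff t\in\mathcal S$, which is already the claimed equivalence. The fourth case, $s=Y(t,\delta)$, is the same with $Y$ in place of $X$.

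There is no real obstacle. The one point to check is that the hypothesis is an \emph{equivalence} rather than a one-directional preservation statement, and that it is stated for arbitrary (possibly invalid) states. Both features are needed in the reversed cases: there we start from a possibly invalid $t$ and must conclude something about $s$ in both directions. Since Assumption~\ref{ass:validity} is stated exactly in this form, all four cases close immediately. A one-line induction along chains then extends the result from adjacency to the orbit relation $\sim$, although that extension is not part of the present statement.
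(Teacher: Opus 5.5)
Your proposal is correct and matches the paper's proof: a case split over the four disjuncts of Definition~\ref{def:adjacency}, each closed by one application of Assumption~\ref{ass:validity}. One small quibble with your commentary: the two-sided (equivalence) form of the assumption is needed in the forward cases as well, not only the reversed ones, since from $t=X(s,\delta)$ you must also derive $t\in\mathcal S\Rightarrow s\in\mathcal S$.
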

\begin{proof}
By Definition~\ref{def:adjacency}, an adjacency step has one of the forms
$t=X(s,\delta)$, $t=Y(s,\delta)$, $s=X(t,\delta)$, or $s=Y(t,\delta)$ with $\delta\ge 0$.
In each case, the equivalence follows directly from Assumption~\ref{ass:validity}.
\end{proof}

\begin{lemma}[Orbit relation preserves validity]\label{lem:orbitPreservesValidity}
Under Assumption~\ref{ass:validity}, if $s\sim t$, then
\[
s\in\mathcal S \iff t\in\mathcal S.
\]
In particular, if $s\in\mathcal S$, then every state in the trading orbit $[s]$ is valid.
\end{lemma}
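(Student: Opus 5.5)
The plan is to argue by induction on the length of a witnessing chain, using Lemma~\ref{lem:adjPreservesValidity} as the single-step case. Suppose $s\sim t$. By Definition~\ref{def:orbit} there is a chain $s=s_0,s_1,\ldots,s_n=t$ with $s_{i-1}\leftrightarrow s_i$ for every $i=1,\ldots,n$. I will prove, by induction on $k$, the claim
\[
s_0\in\mathcal S \iff s_k\in\mathcal S \qquad\text{for } 0\le k\le n .
\]

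The base case $k=0$ is trivial. For the inductive step, assume the claim holds for $k-1$. Lemma~\ref{lem:adjPreservesValidity} applied to the adjacency $s_{k-1}\leftrightarrow s_k$ gives $s_{k-1}\in\mathcal S\iff s_k\in\mathcal S$. Composing this with the inductive hypothesis yields the claim for $k$. Taking $k=n$ gives the stated equivalence.

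For the ``in particular'' clause, let $s\in\mathcal S$ and $t\in[s]$, so that $t\sim s$. By symmetry of $\sim$ (Lemma~\ref{lem:orbitEquiv}), $s\sim t$, and the equivalence just proved gives $t\in\mathcal S$. This is in fact already built into Definition~\ref{def:orbit}, since $[s]$ is defined as a subset of $\mathcal S$. The real content is the stronger statement: no valid state is $\sim$-related to an invalid one, even through intermediate states of $\mathbb R^2$ that the chain might pass through.

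There is no genuine obstacle here. The only point requiring care is to run the induction over chains in all of $\mathbb R^2$, not only over valid states. This matters because the primitive swaps are defined on $\mathbb R^2$, and the equivalence form of Assumption~\ref{ass:validity} is exactly what stops a chain from leaving $\mathcal S$ and re-entering it.
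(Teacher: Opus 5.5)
Your proof is correct and matches the paper's, which simply applies Lemma~\ref{lem:adjPreservesValidity} along a finite adjacency chain from $s$ to $t$; your induction on chain length spells out that step. You are also right that the ``in particular'' clause already follows from Definition~\ref{def:orbit}, since $[s]$ is defined as a subset of $\mathcal S$.
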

\begin{proof}
Apply Lemma~\ref{lem:adjPreservesValidity} along a finite adjacency chain from $s$ to $t$.
\end{proof}

\subsection{Consequences of unit invariance}

\begin{lemma}[Unit scaling maps orbits to orbits]\label{lem:TabMapsOrbits}
Under Assumption~\ref{ass:unit}, if $s\sim t$ and $s,t\in\mathcal S$, then
\[
T_{a,b}s \sim T_{a,b}t
\qquad\text{for all } a,b>0.
\]
\end{lemma}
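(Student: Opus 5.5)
The plan is to push the scaling through an adjacency chain one step at a time. Fix $a,b>0$ and suppose $s\sim t$ with $s,t\in\mathcal S$. By Definition~\ref{def:orbit} there is a chain $s=s_0,s_1,\ldots,s_n=t$ with $s_{i-1}\leftrightarrow s_i$ for every $i$. It then suffices to show that $T_{a,b}s_{i-1}\leftrightarrow T_{a,b}s_i$ for each $i$. Concatenating these steps gives a chain from $T_{a,b}s$ to $T_{a,b}t$, hence $T_{a,b}s\sim T_{a,b}t$.

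The one subtlety is that Assumption~\ref{ass:unit} is stated only for valid base states $s\in\mathcal S$, while the intermediate states $s_i$ of an arbitrary chain are a priori just points of $\mathbb R^2$. So I will first invoke Lemma~\ref{lem:orbitPreservesValidity}, which uses the standing validity invariance Assumption~\ref{ass:validity}. Since $s_0=s\in\mathcal S$ and each $s_i\sim s$, every $s_i$ in the chain is valid. This is the step I expect to be the only real obstacle: without validity invariance the chain could leave $\mathcal S$, and unit invariance would give no information there.

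With all $s_i$ valid, I handle a single step $s'\leftrightarrow t'$ between valid states by the four cases of Definition~\ref{def:adjacency}.
\begin{itemize}
\item If $t'=X(s',\delta)$, then Assumption~\ref{ass:unit} applied at the valid state $s'$ gives $X(T_{a,b}s',a\delta)=T_{a,b}X(s',\delta)=T_{a,b}t'$. Since $a\delta\ge 0$, we get $T_{a,b}s'\leftrightarrow T_{a,b}t'$.
\item If $t'=Y(s',\delta)$, the same argument applies with input $b\delta$.
\item The reversed cases $s'=X(t',\delta)$ and $s'=Y(t',\delta)$ are symmetric. They apply unit invariance at $t'$, which is valid for the same reason.
\end{itemize}
Induction on the chain length $n$ then completes the argument, with the zero-step chain as the trivial base case.
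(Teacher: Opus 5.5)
Your proof is correct and takes the same route as the paper: show via Assumption~\ref{ass:unit} that $T_{a,b}$ carries each adjacency step to an adjacency step (reversed steps by applying it at the other endpoint), then concatenate along the chain. The paper's proof leaves implicit that every intermediate state of the chain is valid, even though Assumption~\ref{ass:unit} is stated only for valid states; your explicit appeal to Lemma~\ref{lem:orbitPreservesValidity} fills this in and shows that the lemma also relies on validity invariance (Assumption~\ref{ass:validity}), not only on unit invariance as its statement suggests.
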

\begin{proof}
It suffices to check one adjacency step.
If $t=X(s,\delta)$, then Assumption~\ref{ass:unit} gives
\[
T_{a,b}t=T_{a,b}X(s,\delta)=X(T_{a,b}s,a\delta),
\]
so $T_{a,b}s\leftrightarrow T_{a,b}t$.
The case $t=Y(s,\delta)$ is analogous.
If the step is written in reverse, symmetry of $\leftrightarrow$ handles it.
Concatenating along a finite chain yields the claim.
\end{proof}

Define the logarithmic coordinate map
\[
\ell:\mathcal S\to\mathbb R^2,
\qquad
\ell(x,y):=(\log x,\log y).
\]

\begin{lemma}[Unit scaling becomes translation in log space]\label{lem:logTranslation}
For all $a,b>0$ and all $s\in\mathcal S$,
\[
\ell(T_{a,b}s)=\ell(s)+(\log a,\log b).
\]
\end{lemma}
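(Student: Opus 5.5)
The statement of Lemma~\ref{lem:logTranslation} is a direct computation, so the plan is to unfold the definitions and apply the logarithm product rule coordinatewise. There is nothing to invoke beyond Definition of $T_{a,b}$ in Assumption~\ref{ass:unit} and the definition of $\ell$.

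First, fix $a,b>0$ and $s=(x,y)\in\mathcal S$, so $x>0$ and $y>0$. Then $T_{a,b}s=(ax,by)$ has both coordinates positive, because products of positive reals are positive. Hence $T_{a,b}s\in\mathcal S$ and $\ell(T_{a,b}s)$ is defined. Second, compute
\[
\ell(T_{a,b}s)=(\log(ax),\log(by))=(\log a+\log x,\;\log b+\log y)=\ell(s)+(\log a,\log b),
\]
using $\log(pq)=\log p+\log q$ for $p,q>0$ in each coordinate separately.

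The only point that needs any care is the first step: the identity must be stated on $\mathcal S$, and we need $T_{a,b}$ to map $\mathcal S$ into itself so that $\ell$ is applied only where it is defined. This is immediate from positivity of $a,b,x,y$, so no real obstacle is expected. In the Lean development this presumably corresponds to using the positivity side conditions required by the real logarithm's multiplicativity lemma.
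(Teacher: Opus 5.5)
Your proof is correct and matches the paper's, which also just applies $\log(ax)=\log x+\log a$ and $\log(by)=\log y+\log b$ coordinatewise. Your extra check that $T_{a,b}$ maps $\mathcal S$ into itself, so that $\ell(T_{a,b}s)$ is defined, is a small detail the paper leaves implicit.
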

\begin{proof}
Immediate from $\log(ax)=\log x+\log a$ and $\log(by)=\log y+\log b$.
\end{proof}

Since $\ell$ is a bijection $\mathcal S\to\mathbb R^2$, its inverse is
$\ell^{-1}(u,v)=(e^u,e^v)$.
Lift the orbit relation to $\mathbb R^2$ by declaring, for $z,z'\in\mathbb R^2$,
\[
z\approx z'
\quad:\Longleftrightarrow\quad
\ell^{-1}(z)\sim \ell^{-1}(z').
\]

\begin{lemma}[Translation invariance of $\approx$]\label{lem:translationInvariantApprox}
Under Assumption~\ref{ass:unit}, if $z\approx z'$, then
\[
z+\delta \approx z'+\delta
\qquad\text{for all } \delta\in\mathbb R^2.
\]
\end{lemma}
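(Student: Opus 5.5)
The plan is to write $\delta=(\log a,\log b)$ with $a:=e^{\delta_1}>0$ and $b:=e^{\delta_2}>0$, and then transport the hypothesis $z\approx z'$ back to reserve space, apply the unit scaling $T_{a,b}$, and return to log space. Concretely, set $s:=\ell^{-1}(z)$ and $t:=\ell^{-1}(z')$. Both lie in $\mathcal S$, since $\ell^{-1}(u,v)=(e^u,e^v)$ has positive coordinates. By the definition of $\approx$, the hypothesis $z\approx z'$ means exactly $s\sim t$.

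Next I apply Lemma~\ref{lem:TabMapsOrbits} to the valid states $s,t$ to obtain $T_{a,b}s\sim T_{a,b}t$. By Lemma~\ref{lem:logTranslation},
\[
\ell(T_{a,b}s)=\ell(s)+(\log a,\log b)=z+\delta .
\]
Since $T_{a,b}s\in\mathcal S$ and $\ell$ is a bijection $\mathcal S\to\mathbb R^2$, this gives $\ell^{-1}(z+\delta)=T_{a,b}s$, and in the same way $\ell^{-1}(z'+\delta)=T_{a,b}t$. Therefore $\ell^{-1}(z+\delta)\sim\ell^{-1}(z'+\delta)$, which by definition is $z+\delta\approx z'+\delta$.

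The only point that needs care is the hypothesis of Lemma~\ref{lem:TabMapsOrbits}, which is stated for valid $s,t$ (unit invariance itself is only assumed on valid states). The relevant subtlety is that the chain witnessing $s\sim t$ might pass through invalid intermediate states, where Assumption~\ref{ass:unit} gives no information. Taking that lemma as stated, nothing more is needed. If one wants to check it, I would note that every intermediate state in a chain between valid endpoints is itself valid by Lemma~\ref{lem:orbitPreservesValidity} (using validity invariance). Then the one-step argument of Lemma~\ref{lem:TabMapsOrbits} applies at each step, including reverse steps $s_{i-1}=X(s_i,\delta)$, where $T_{a,b}$ is applied at the valid state $s_i$. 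Apart from this, the argument is purely a matter of unfolding the definitions.
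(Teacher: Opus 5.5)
Your proof is correct and is the paper's argument: set $a=e^{\delta_1}$, $b=e^{\delta_2}$, note that $\ell^{-1}(z+\delta)=T_{a,b}\ell^{-1}(z)$ (and likewise for $z'$), and apply Lemma~\ref{lem:TabMapsOrbits}. You also point out that the chain's intermediate states are valid by Lemma~\ref{lem:orbitPreservesValidity}, so validity invariance is tacitly needed for the scaling lemma; this correctly makes explicit a step the paper leaves implicit.
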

\begin{proof}
Write $\delta=(\alpha,\beta)$ and set $a=e^\alpha$, $b=e^\beta$.
Then $\ell^{-1}(z+\delta)=T_{a,b}\ell^{-1}(z)$ and likewise for $z'$.
By Lemma~\ref{lem:TabMapsOrbits}, orbit-equivalence is preserved under $T_{a,b}$.
\end{proof}

\subsection{Structure of log-orbits}

Let $0=(0,0)\in\mathbb R^2$ and define the $\approx$-orbit of the origin,
\[
H := \{h\in\mathbb R^2 \mid h\approx 0\}.
\]
Translation invariance implies that every lifted orbit is a coset of $H$.

\begin{lemma}[$H$ is an additive subgroup]\label{lem:Hsubgroup}
$H$ is an additive subgroup of $(\mathbb R^2,+)$.
\end{lemma}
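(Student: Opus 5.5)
We verify the three subgroup axioms directly. We use two facts established above: $\approx$ is an equivalence relation on $\mathbb R^2$, and $\approx$ is translation invariant (Lemma~\ref{lem:translationInvariantApprox}). First we check that $\approx$ is an equivalence relation. Since $\ell^{-1}:\mathbb R^2\to\mathcal S$ is a bijection and $z\approx z'$ is defined as $\ell^{-1}(z)\sim\ell^{-1}(z')$, reflexivity, symmetry and transitivity of $\approx$ are inherited directly from Lemma~\ref{lem:orbitEquiv}.

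\emph{Identity.} We have $0\approx 0$ by reflexivity, so $0\in H$.

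\emph{Inverses.} Let $h\in H$, so $h\approx 0$. Translate by $\delta=-h$ using Lemma~\ref{lem:translationInvariantApprox}. This gives $0\approx -h$, and by symmetry $-h\approx 0$. Hence $-h\in H$.

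\emph{Closure under addition.} Let $h_1,h_2\in H$. Translating $h_2\approx 0$ by $\delta=h_1$ gives $h_1+h_2\approx h_1$. Since $h_1\approx 0$, transitivity yields $h_1+h_2\approx 0$, so $h_1+h_2\in H$.

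The main point needing care is that Lemma~\ref{lem:translationInvariantApprox} rests on Lemma~\ref{lem:TabMapsOrbits}. That lemma is stated for valid states, and the chain witnessing $s\sim t$ could in principle pass through invalid intermediate states. Lemma~\ref{lem:orbitPreservesValidity} (validity invariance) removes this concern: every state in a chain starting from a valid state is itself valid. So unit invariance applies at each adjacency step, and the translation lemma holds for arbitrary $z,z'\in\mathbb R^2$ and all $\delta\in\mathbb R^2$, as used above. With this settled, the argument is a routine verification.
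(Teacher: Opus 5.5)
Your proof is correct and matches the paper's argument: identity by reflexivity, inverses by translating $h\approx 0$ by $-h$, and closure by one translation followed by transitivity. The paper translates $h_1\approx 0$ by $h_2$ rather than $h_2\approx 0$ by $h_1$, which is immaterial, and your remark that validity invariance is what lets unit invariance apply at every intermediate state of a chain correctly makes explicit a point the paper's proof of Lemma~\ref{lem:TabMapsOrbits} leaves implicit.
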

\begin{proof}
We have $0\in H$ by reflexivity.
If $h\in H$, then $h\approx 0$, and translating by $-h$ yields $0\approx -h$, hence $-h\in H$.
If $h_1,h_2\in H$, then $h_1\approx 0$ and $h_2\approx 0$.
Translating $h_1\approx 0$ by $h_2$ gives $h_1+h_2\approx h_2$, and since $h_2\approx 0$, transitivity yields
$h_1+h_2\approx 0$.
Thus $h_1+h_2\in H$.
\end{proof}

\begin{lemma}[Every lifted orbit is a coset of $H$]\label{lem:cosetsOfH}
For $z,z'\in\mathbb R^2$,
\[
z\approx z'
\quad\Longleftrightarrow\quad
z-z'\in H.
\]
Hence each $\approx$-orbit has the form $z_0+H$.
\end{lemma}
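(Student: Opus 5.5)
The proof rests on two ingredients: the translation invariance of $\approx$ from Lemma~\ref{lem:translationInvariantApprox}, and the fact that $\approx$ is an equivalence relation on $\mathbb R^2$. The latter follows because $\sim$ is an equivalence relation (Lemma~\ref{lem:orbitEquiv}) and $\ell^{-1}:\mathbb R^2\to\mathcal S$ is a bijection, so pulling $\sim$ back along $\ell^{-1}$ preserves reflexivity, symmetry and transitivity. With these in hand, each direction of the equivalence is a single translation followed by unwinding the definition of $H$.

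For the forward direction, assume $z\approx z'$. Translating both sides by $\delta=-z'$ via Lemma~\ref{lem:translationInvariantApprox} gives $z-z'\approx 0$. By definition of $H$ this says $z-z'\in H$.

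For the backward direction, assume $z-z'\in H$, that is, $z-z'\approx 0$. Translating by $\delta=z'$ gives $z\approx z'$. Note that Lemma~\ref{lem:Hsubgroup} is not even needed here; only translation invariance is used, applied with an arbitrary $\delta\in\mathbb R^2$.

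For the final assertion, fix $z_0$ and consider its $\approx$-class $\{z: z\approx z_0\}$. By the equivalence just proved, this set equals $\{z : z-z_0\in H\}=z_0+H$. Every $\approx$-orbit arises this way from any of its points, so every $\approx$-orbit is a coset of $H$.

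There is no real obstacle. The one point needing care is that Lemma~\ref{lem:translationInvariantApprox} is stated for arbitrary $\delta\in\mathbb R^2$, including negative components. This is legitimate because $a=e^{\alpha}>0$ and $b=e^{\beta}>0$ for all real $\alpha,\beta$. It also requires that $\ell^{-1}$ always lands in $\mathcal S$, so that Lemma~\ref{lem:TabMapsOrbits}, which needs $s,t\in\mathcal S$, applies; this holds since $\ell^{-1}(u,v)=(e^u,e^v)$ has positive coordinates.
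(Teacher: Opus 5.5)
Your proposal is correct and follows the paper's proof: you translate $z\approx z'$ by $-z'$ to get $z-z'\in H$, and translate $z-z'\approx 0$ by $z'$ to recover $z\approx z'$. Your explicit check that $\approx$ is an equivalence relation and your derivation of the coset form $z_0+H$ spell out steps the paper leaves implicit.
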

\begin{proof}
If $z\approx z'$, translate by $-z'$ to get $z-z'\approx 0$, so $z-z'\in H$.
Conversely, if $z-z'\in H$, then $z-z'\approx 0$; translating by $z'$ gives $z\approx z'$.
\end{proof}

\begin{lemma}[$H$ is the graph of a strictly decreasing partial function]\label{lem:Hgraph}
For each $u\in\mathbb R$ there is at most one $v\in\mathbb R$ such that $(u,v)\in H$.
Moreover, if $(u_1,v_1),(u_2,v_2)\in H$ and $u_1<u_2$, then $v_1>v_2$.
\end{lemma}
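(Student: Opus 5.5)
The plan is to transport the question back to reserve space and apply Pareto efficiency (Assumption~\ref{ass:pareto}). The second claim, strict monotonicity, implies the first (uniqueness of $v$ for given $u$). So I would prove one basic fact and derive both parts from it: if $(u_1,v_1),(u_2,v_2)\in H$ with $u_1\le u_2$ and $v_1\le v_2$, then the two points coincide.

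To prove this fact, set $s=\ell^{-1}(u_1,v_1)=(e^{u_1},e^{v_1})$ and $t=\ell^{-1}(u_2,v_2)=(e^{u_2},e^{v_2})$. Both are valid states, since $\ell^{-1}$ lands in $\mathcal S$. By definition of $H$ we have $(u_1,v_1)\approx 0$ and $(u_2,v_2)\approx 0$, so by symmetry and transitivity of $\approx$ (inherited from Lemma~\ref{lem:orbitEquiv}) we get $(u_1,v_1)\approx(u_2,v_2)$. Unwinding the definition of $\approx$ gives $s\sim t$. Since $\exp$ is strictly increasing, $u_1\le u_2$ and $v_1\le v_2$ give $t\succeq s$. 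Pareto efficiency then yields $t=s$. Injectivity of $\exp$ gives $(u_1,v_1)=(u_2,v_2)$.

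Both parts now follow.
\begin{itemize}
\item \emph{Uniqueness.} Suppose $(u,v),(u,v')\in H$, and without loss of generality $v\le v'$. The basic fact gives $v=v'$.
\item \emph{Strict decrease.} Suppose $u_1<u_2$ with both points in $H$. If $v_1\le v_2$, the basic fact would force $u_1=u_2$, a contradiction. Hence $v_1>v_2$.
\end{itemize}

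There is no real obstacle here. The only care needed is to note that $\approx$ is an equivalence relation because $\sim$ is one and $\ell$ is a bijection, and that $\ell^{-1}$ maps into $\mathcal S$, so Pareto efficiency (stated only for valid states) applies.
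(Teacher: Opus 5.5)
Your proof is correct and is essentially the paper's argument. The paper observes that $(u,v)\in H$ means $\ell^{-1}(u,v)\sim(1,1)$ and invokes Theorem~\ref{thm:orbitsDecreasing}, which is Pareto efficiency on the orbit of $(1,1)$; you inline that Pareto step and make explicit the transport through the monotone bijection $\exp$, which the paper leaves implicit. One small correction: your opening claim that strict monotonicity implies uniqueness is not literally true, since the strict-decrease clause says nothing about two points with the same $u$, but nothing depends on it because you derive both parts separately from your basic fact.
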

\begin{proof}
Since $(u,v)\in H$ means $\ell^{-1}(u,v)\sim (1,1)$, the claim is exactly
Theorem~\ref{thm:orbitsDecreasing} applied to the orbit of $(1,1)$.
\end{proof}

\begin{lemma}[$H$ has full domain]\label{lem:HfullDomain}
For every $u\in\mathbb R$ there exists $v\in\mathbb R$ such that $(u,v)\in H$.
\end{lemma}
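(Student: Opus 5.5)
The plan is to produce, for each $u>0$, an explicit point of $H$ with first coordinate $u$ by a single primitive $X$-swap from $(1,1)$. The cases $u=0$ and $u<0$ then follow from the subgroup structure of Lemma~\ref{lem:Hsubgroup}.

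\textbf{Case $u>0$.} Set $dx := e^{u}-1>0$ and consider $t := X((1,1),dx)$. By Definition~\ref{def:swap}, $t=(1+dx,\;1-y_{\mathrm{out}}((1,1),dx))=(e^{u},\,y')$ for some real $y'$. Since $(1,1)\in\mathcal S$, Assumption~\ref{ass:validity} gives $t\in\mathcal S$. Hence $y'>0$, and $\ell(t)=(u,\log y')$ is well defined. By Definition~\ref{def:adjacency} we have $(1,1)\leftrightarrow t$, so $t\sim(1,1)$. Because $\ell^{-1}(0)=(1,1)$, this means $\ell(t)\approx 0$, that is, $(u,\log y')\in H$.

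\textbf{Case $u=0$.} Here $0\in H$ by reflexivity of $\sim$ (Lemma~\ref{lem:orbitEquiv}), so we take $v=0$.

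\textbf{Case $u<0$.} Apply the positive case to $-u>0$ to get $(-u,v')\in H$. Lemma~\ref{lem:Hsubgroup} says $H$ is closed under negation, so $(u,-v')\in H$.

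The argument is almost entirely routine. The one point needing care is that the post-swap state actually lies in $\mathcal S$, so that its logarithm exists and it genuinely belongs to the trading orbit; validity invariance supplies exactly this. Note that we never use the $Y$-swap, and we never need swaps that decrease $x$: the negative half of the domain comes for free from the group structure obtained through unit invariance.
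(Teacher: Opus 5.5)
Your proposal is correct and follows the paper's proof essentially verbatim. You use a single $X$-swap of size $e^u-1$ from $(1,1)$, with validity invariance guaranteeing the logarithm exists, and then negation in the subgroup $H$ for $u<0$. The only difference is that you treat $u=0$ via reflexivity instead of allowing $dx=0$ in the swap, which is immaterial.
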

\begin{proof}
First assume $u\ge 0$.
Set
\[
dx:=e^u-1\ge 0,
\qquad
t:=X((1,1),dx).
\]
By Assumption~\ref{ass:validity}, the state $t$ is valid.
Its first coordinate is
\[
t_x = 1 + (e^u-1) = e^u.
\]
Hence
\[
\ell(t)=(u,\log t_y),
\]
and since $t$ is obtained from $(1,1)$ by a single swap, we have $\ell(t)\in H$.

Now let $u<0$.
By the positive case, there exists $v$ such that $(-u,v)\in H$.
Since $H$ is a subgroup, also $(u,-v)=-( -u,v)\in H$.
\end{proof}

\subsection{Consequences of Pareto efficiency}

\begin{theorem}[Orbits are strictly decreasing graphs]\label{thm:orbitsDecreasing}
Fix a valid state $s\in\mathcal S$.
Then there exists a unique function $f_s:\mathrm{proj}_x([s])\to\mathbb R$ such that
\[
(x,y)\in [s] \quad\Longleftrightarrow\quad y=f_s(x),
\]
and $f_s$ is strictly decreasing on its domain.
\end{theorem}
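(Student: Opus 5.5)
The plan is to derive everything from Pareto efficiency (Assumption~\ref{ass:pareto}) applied to pairs of points in the orbit $[s]$. By Lemma~\ref{lem:orbitPreservesValidity}, every element of $[s]$ is valid. Moreover, any two $t,t'\in[s]$ satisfy $t\sim t'$, by symmetry and transitivity of $\sim$ (Lemma~\ref{lem:orbitEquiv}). So Assumption~\ref{ass:pareto} applies directly to any pair of points in $[s]$. No use of unit invariance is needed for this statement.

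\textbf{Step 1 (single-valuedness).} Suppose $(x,y_1),(x,y_2)\in[s]$. Without loss of generality $y_1\le y_2$. Then $(x,y_2)\succeq(x,y_1)$, and the two states are orbit-related, so Pareto efficiency forces $(x,y_2)=(x,y_1)$, i.e.\ $y_1=y_2$.

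Hence for each $x\in\mathrm{proj}_x([s])$ there is exactly one $y$ with $(x,y)\in[s]$. Define $f_s(x)$ to be this $y$. By construction, $(x,y)\in[s]$ iff $x\in\mathrm{proj}_x([s])$ and $y=f_s(x)$, which is the displayed equivalence (with $x$ ranging over the domain). Uniqueness of $f_s$ is immediate, since any function with this property must pick out the unique $y$ over each $x$.

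\textbf{Step 2 (strict decrease).} Take $x_1<x_2$ in the domain, and put $t_1=(x_1,f_s(x_1))$ and $t_2=(x_2,f_s(x_2))$. If $f_s(x_1)\le f_s(x_2)$, then $t_2\succeq t_1$ with $t_1\sim t_2$. Pareto efficiency then gives $t_1=t_2$, contradicting $x_1\ne x_2$. Hence $f_s(x_1)>f_s(x_2)$.

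I expect no real obstacle here. The only point requiring care is bookkeeping: Assumption~\ref{ass:pareto} is stated for valid states related by $\sim$, so one must first note, via Lemma~\ref{lem:orbitPreservesValidity} and the equivalence-relation property, that any two points in the trading orbit qualify. The rest is the two short contradiction arguments above.
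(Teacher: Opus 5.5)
Your proof is correct and matches the paper's argument: both single-valuedness and strict decrease come from applying Pareto efficiency to two orbit points where one weakly dominates the other. Your extra bookkeeping is harmless, though Lemma~\ref{lem:orbitPreservesValidity} is not needed, because validity is already built into the definition $[s]=\{t\in\mathcal S : t\sim s\}$.
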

\begin{proof}
Uniqueness at fixed $x$: if $(x,y_1),(x,y_2)\in[s]$ and $y_2\ge y_1$, then
$(x,y_2)\succeq(x,y_1)$.
By Assumption~\ref{ass:pareto}, they must be equal, so $y_1=y_2$.

Strict decrease: if $(x_1,y_1),(x_2,y_2)\in[s]$ with $x_1<x_2$ and $y_2\ge y_1$, then
$(x_2,y_2)\succeq(x_1,y_1)$, again contradicting Assumption~\ref{ass:pareto}.
\end{proof}

\begin{lemma}[$H$ lies in the anti-diagonal region]\label{lem:HantiDiagonal}
Every element of $H$ satisfies
\[
uv\le 0.
\]
Equivalently,
\[
H\subseteq \{(u,v)\in\mathbb R^2 : uv\le 0\}.
\]
\end{lemma}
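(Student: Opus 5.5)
We argue by contradiction, using only that the origin lies in $H$ together with the strict monotonicity of Lemma~\ref{lem:Hgraph}, which is Theorem~\ref{thm:orbitsDecreasing} applied to the orbit of $(1,1)$. The plan is to show that an element of $H$ with $uv>0$ would be Pareto-comparable to the origin, which is forbidden.

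Let $h=(u,v)\in H$ and suppose, for contradiction, that $uv>0$. Then $u$ and $v$ are both nonzero and have the same sign, so we split into two cases.

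\emph{Case $u>0,\ v>0$.} Here $(0,0)$ and $(u,v)$ both lie in $H$ and $0<u$. Lemma~\ref{lem:Hgraph} then requires $0>v$, which contradicts $v>0$.

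\emph{Case $u<0,\ v<0$.} Here $u<0$, so Lemma~\ref{lem:Hgraph} applied to the pair $(u,v),(0,0)$ requires $v>0$, which is again a contradiction. Alternatively, the subgroup property (Lemma~\ref{lem:Hsubgroup}) gives $-h\in H$ with both coordinates positive, which reduces this case to the first.

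An equivalent route avoids Lemma~\ref{lem:Hgraph} and works in reserve space. Pull $h$ back to the state $\ell^{-1}(h)=(e^u,e^v)$, which satisfies $(e^u,e^v)\sim(1,1)$. If $u,v>0$, this state Pareto-dominates $(1,1)$, so Assumption~\ref{ass:pareto} forces $(e^u,e^v)=(1,1)$, contradicting $u\neq 0$. If $u,v<0$, then $(1,1)\succeq(e^u,e^v)$, and the same argument applies.

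We do not expect a real obstacle. The only point needing care is that Lemma~\ref{lem:Hgraph} is stated with strict inequality in $u$, so the case $u=0$ must be handled. This is exactly why the claim is $uv\le 0$ rather than strict: when $u=0$ the inequality holds automatically, and uniqueness in Lemma~\ref{lem:Hgraph} further gives $v=0$.
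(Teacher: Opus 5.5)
Your proof is correct and essentially the paper's. Your reserve-space route (pulling back to $(e^u,e^v)\sim(1,1)$ and invoking Assumption~\ref{ass:pareto} in each sign case) is exactly the paper's argument, and your main route through Lemma~\ref{lem:Hgraph} is the same Pareto comparison with the origin, just routed through Theorem~\ref{thm:orbitsDecreasing}; the closing remark about handling $u=0$ is superfluous, since the hypothesis $uv>0$ already excludes it.
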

\begin{proof}
Let $(u,v)\in H$.
Then $(e^u,e^v)\sim(1,1)$.

If $u>0$ and $v>0$, then $(e^u,e^v)\succ(1,1)$, contradicting Assumption~\ref{ass:pareto}.
If $u<0$ and $v<0$, then $(1,1)\succ(e^u,e^v)$, again contradicting Assumption~\ref{ass:pareto}.
Hence the coordinates cannot have the same nonzero sign, so $uv\le 0$.
\end{proof}

\begin{lemma}[Subgroups in the anti-diagonal region are linear]\label{lem:subgroupLinear}
Let $G$ be a nontrivial additive subgroup of $\mathbb R^2$ such that
\[
G\subseteq \{(u,v)\in\mathbb R^2 : uv\le 0\}.
\]
Assume moreover that for every $u\in\mathbb R$ there exists $v\in\mathbb R$ with $(u,v)\in G$, and that for each fixed $u$ there is at most one such $v$.
Then there exists $c>0$ such that
\[
G=\{(u,v)\in\mathbb R^2 : v=-c\,u\}.
\]
\end{lemma}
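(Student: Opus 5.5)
The hypotheses say that $G$ is the graph of a total function $g:\mathbb R\to\mathbb R$, meaning $(u,v)\in G\iff v=g(u)$. The plan is to show that $g$ is additive and monotone, and hence linear. Since $G$ is a subgroup, $(u_1,g(u_1))+(u_2,g(u_2))\in G$. Uniqueness of the second coordinate then gives $g(u_1+u_2)=g(u_1)+g(u_2)$, so $g$ is a Cauchy-additive function on $\mathbb R$. In particular $g(0)=0$, $g(-u)=-g(u)$, and $g(qu)=q\,g(u)$ for every rational $q$, which follows by the usual induction on integers and then division.

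Next I would use the anti-diagonal condition to obtain monotonicity. For $u>0$, the hypothesis $u\,g(u)\le 0$ gives $g(u)\le 0$. Now take $u_1<u_2$. Additivity gives $g(u_2)-g(u_1)=g(u_2-u_1)\le 0$, so $g$ is non-increasing. A monotone additive function is linear. To see this, set $c:=-g(1)$, so that $g(q)=-cq$ for all $q\in\mathbb Q$. For real $u$, choose rationals $q<u<q'$; monotonicity gives $-cq'\le g(u)\le -cq$, and letting $q,q'\to u$ yields $g(u)=-cu$. Consequently $G=\{(u,v):v=-cu\}$ with $c\ge 0$.

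The main obstacle is excluding $c=0$, since the line $v=0$ satisfies every stated hypothesis: it is a nontrivial subgroup, it lies in $\{uv\le 0\}$, and it has full domain with unique fibers. So $c>0$ cannot follow from the hypotheses exactly as written, and I would strengthen them in one of two ways. The first option is to add the strict condition from Lemma~\ref{lem:Hgraph}, namely that $u_1<u_2$ implies $v_1>v_2$. This makes $g$ strictly decreasing, so $g(1)<g(0)=0$ and therefore $c>0$. The second option is to strengthen the region to exclude nonzero points on the axes. For $H$ this is justified by Pareto efficiency, because $(e^u,1)\succeq(1,1)$ when $u>0$. When the lemma is applied to $H$ in the main theorem, Lemma~\ref{lem:Hgraph} is available, so I would invoke that strict decrease at this step.
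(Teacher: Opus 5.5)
Your proof is correct, and so is your counterexample. $G=\mathbb R\times\{0\}$ is a nontrivial subgroup, lies in $\{uv\le 0\}$, and is the graph of the zero function, so the lemma as stated is false.

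The paper's own proof fails at exactly this step. Its last sentence claims slope $0$ is excluded by ``the existence of a unique $v$ for every $u$ together with nontriviality and the anti-diagonal constraint.'' The $u$-axis refutes that claim. Your repair is the right one. In the application, Theorem~\ref{thm:Hline}, the strict-decrease clause of Lemma~\ref{lem:Hgraph} is available; equivalently, Pareto efficiency applied to $(e,1)$ versus $(1,1)$ works. Either one forces $g(1)<0$ and hence $c>0$. So the main classification survives, and only the lemma's hypotheses need strengthening.

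Your route to linearity differs from the paper's.
\begin{itemize}
\item \textbf{Your route.} You use the graph hypotheses from the outset. Uniqueness turns subgroup closure into Cauchy additivity of $g$. The anti-diagonal condition then makes $g$ non-increasing, and monotone additive functions are linear by squeezing between rationals.
\item \textbf{The paper's route.} It first sets the graph hypotheses aside. An Archimedean argument (integers $n,m$ with $nh_1-mh_2$ in the open first quadrant, plus a separate axis case) shows that any subgroup inside $\{uv\le 0\}$ contains no two linearly independent vectors, so it lies in a line. Full domain then excludes the vertical line, and uniqueness upgrades containment to equality.
\end{itemize}
Your version is shorter, avoids case analysis, and makes the role of the uniqueness hypothesis transparent. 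The paper's version isolates a slightly stronger structural fact: one-dimensionality follows from the subgroup and anti-diagonal conditions alone, and the graph hypotheses are needed only at the end. Both rest on the same Archimedean property of $\mathbb R$.
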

\begin{proof}
We first show that $G$ contains no two linearly independent vectors.

Suppose for contradiction that $h_1=(u_1,v_1)$ and $h_2=(u_2,v_2)$ are linearly independent elements of $G$.
Since $G=-G$, replacing vectors by their negatives if necessary, we may assume both lie in the closed fourth quadrant, so
\[
u_1,u_2\ge 0, \qquad v_1,v_2\le 0.
\]

If one vector lies on an axis, say $h_1=(u_1,0)$ with $u_1>0$, then
$h_2$ must have $v_2<0$ (otherwise the two vectors are linearly dependent).
For $n$ large enough, $nh_1-h_2=(nu_1-u_2,\,-v_2)$ has both coordinates positive,
contradicting $G\subseteq\{uv\le 0\}$.
Hence we may assume
\[
u_1,u_2>0, \qquad v_1,v_2<0.
\]

Because $h_1$ and $h_2$ are linearly independent, the ratios $|v_1|/|v_2|$ and $u_1/u_2$ are unequal.
After relabeling, assume
\[
\frac{|v_1|}{|v_2|}<\frac{u_1}{u_2}.
\]
Choose $n\in\mathbb N$ so large that the interval
\[
\left(n\frac{|v_1|}{|v_2|},\, n\frac{u_1}{u_2}\right)
\]
contains an integer $m$.
Then for
\[
z:=n h_1-m h_2\in G
\]
we have
\[
z_1=nu_1-mu_2>0,
\qquad
z_2=nv_1-mv_2=-n|v_1|+m|v_2|>0.
\]
Thus $z$ lies in the open first quadrant, contradicting $G\subseteq\{uv\le 0\}$.

Therefore $G$ contains no two linearly independent vectors, so $\operatorname{span}(G)$ is one-dimensional.
Hence $G$ is contained in a line $L$ through the origin.

By the full-domain assumption, there exists some element of $G$ with nonzero first coordinate, so $L$ is not vertical.
Therefore $L$ has the form
\[
L=\{(u,v)\in\mathbb R^2 : v=\kappa u\}
\]
for some $\kappa\in\mathbb R$.
Since for every $u\in\mathbb R$ there exists $(u,v)\in G$ and $G\subseteq L$, the uniqueness assumption forces $v=\kappa u$.
Hence $G=L$.

Finally, because every nonzero element of $G$ satisfies $uv\le 0$, the slope must be non-positive.
It cannot be zero, because then $G$ would be the $u$-axis, contradicting the existence of a unique $v$ for every $u$ together with nontriviality and the anti-diagonal constraint.
So $\kappa<0$.
Writing $\kappa=-c$ with $c>0$ gives
\[
G=\{(u,v)\in\mathbb R^2 : v=-c\,u\}.
\]
\end{proof}

\begin{theorem}[The subgroup $H$ is a line of negative slope]\label{thm:Hline}
There exists $c>0$ such that
\[
H=\{(u,v)\in\mathbb R^2 : v=-c\,u\}.
\]
\end{theorem}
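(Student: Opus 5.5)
The plan is to apply Lemma~\ref{lem:subgroupLinear} with $G:=H$. All of its hypotheses have already been established in the preceding lemmas, so it only remains to collect them and check nontriviality.

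\textbf{Hypotheses of Lemma~\ref{lem:subgroupLinear}.}
\begin{itemize}
\item Lemma~\ref{lem:Hsubgroup} shows that $H$ is an additive subgroup of $\mathbb R^2$.
\item Lemma~\ref{lem:HantiDiagonal} places $H$ in the anti-diagonal region $\{uv\le 0\}$. This uses Pareto efficiency on the orbit of $(1,1)$.
\item Lemma~\ref{lem:HfullDomain} gives the full-domain property: for every $u$ some $(u,v)\in H$. This uses validity invariance to guarantee that $X((1,1),e^u-1)$ is a valid state, and the subgroup property to handle $u<0$.
\item Lemma~\ref{lem:Hgraph}, which rests on Theorem~\ref{thm:orbitsDecreasing}, gives uniqueness of $v$ for fixed $u$.
\end{itemize}

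\textbf{Nontriviality.} By full domain there is some $(1,v)\in H$. This element is nonzero since its first coordinate is $1$, so $H\neq\{0\}$.

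\textbf{Conclusion.} Lemma~\ref{lem:subgroupLinear} then yields $c>0$ with $H=\{(u,v): v=-cu\}$, which is exactly the claim.

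The main obstacle is therefore already contained in Lemma~\ref{lem:subgroupLinear}: ruling out two linearly independent elements of $H$ via the integer-combination argument. Here the only new step is bookkeeping. One point deserves a second check. The final part of that lemma excludes slope zero by appealing to uniqueness together with the anti-diagonal constraint, which is slightly loose as written. If that needs reinforcing, I would argue directly: if $\kappa=0$, then $(1,0)\in H$, so $(e,1)\sim(1,1)$. Since $(e,1)\succeq(1,1)$ and the two states differ, this contradicts Assumption~\ref{ass:pareto}. Hence $\kappa<0$.
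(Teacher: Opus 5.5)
Your proposal is correct and follows the paper's proof exactly: apply Lemma~\ref{lem:subgroupLinear} to $G=H$, with the hypotheses supplied by Lemmas~\ref{lem:Hsubgroup}, \ref{lem:HantiDiagonal}, \ref{lem:HfullDomain} and~\ref{lem:Hgraph}; your explicit nontriviality check is a harmless addition that the paper leaves implicit. Your reinforcement of the slope-zero step is needed, not optional: the $u$-axis $\{(u,0)\}$ satisfies every stated hypothesis of Lemma~\ref{lem:subgroupLinear}, so that lemma is false as written, and $\kappa=0$ must be excluded using Pareto efficiency as you do (equivalently, by the strict-decrease half of Lemma~\ref{lem:Hgraph}, which $(0,0),(1,0)\in H$ would violate).
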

\begin{proof}
Apply Lemma~\ref{lem:subgroupLinear} to $G=H$.

The subgroup property is Lemma~\ref{lem:Hsubgroup}.
The anti-diagonal property is Lemma~\ref{lem:HantiDiagonal}.
Existence of some $v$ for each $u$ is Lemma~\ref{lem:HfullDomain}.
Uniqueness of $v$ at fixed $u$ is Lemma~\ref{lem:Hgraph}.
\end{proof}

\begin{theorem}[Log-orbits are parallel lines with negative slope]\label{thm:logLines}
There exists a constant $c>0$ such that every $\approx$-orbit is a translate of the line
\[
\{(u,v)\in\mathbb R^2 : v=-c\,u\}.
\]
Equivalently, the linear functional
\[
L(u,v):=c\,u+v
\]
is constant on every lifted orbit.
\end{theorem}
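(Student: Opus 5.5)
The plan is to combine Theorem~\ref{thm:Hline} with Lemma~\ref{lem:cosetsOfH}. Theorem~\ref{thm:Hline} provides the constant $c>0$ with $H=\{(u,v): v=-c\,u\}$. This is the same $c$ for every orbit, since $H$ is defined through the single base point $0$ and does not depend on which orbit we consider.

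First, fix an arbitrary $\approx$-orbit $J$ and a point $z_0\in J$. By Lemma~\ref{lem:cosetsOfH}, $z\in J$ (i.e.\ $z\approx z_0$) holds iff $z-z_0\in H$. Hence $J=z_0+H$, which is a translate of the line $v=-cu$.

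For the equivalent formulation, note that $L(u,v)=cu+v$ is linear and vanishes exactly on $H$. So for $z,z'$ in the same orbit, $L(z)-L(z')=L(z-z')=0$, and $L$ is constant on $J$.

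One could also record the converse: $J$ is precisely the full level set $\{L=L(z_0)\}$. This holds because $L(z)=L(z_0)$ gives $z-z_0\in\ker L=H$, and then Lemma~\ref{lem:cosetsOfH} gives $z\approx z_0$. This converse is what the main theorem needs in order to say that orbits are \emph{exactly} the level sets.

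There is no real obstacle here; the substantive work was done in Lemma~\ref{lem:subgroupLinear}. The only care needed is to make sure the kernel of $L$ is identified with $H$ exactly, not merely contained in it. Both directions follow from the explicit description of $H$ as a full line in Theorem~\ref{thm:Hline}.
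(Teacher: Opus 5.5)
Your proof is correct and matches the paper's argument. Theorem~\ref{thm:Hline} gives $H=\{v=-cu\}$, Lemma~\ref{lem:cosetsOfH} makes every lifted orbit a coset $z_0+H$, and $L$ vanishes on $H$, so it is constant on cosets. Your added converse, that each orbit is the full level set $\{L=L(z_0)\}$ because $\ker L=H$, is also correct; the paper's own proof only states the forward direction and uses the converse implicitly in Theorem~\ref{thm:weightedOrbitIff}.
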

\begin{proof}
By Theorem~\ref{thm:Hline}, the orbit of the origin is the line
\[
H=\{(u,v):v=-c\,u\}
\]
for some $c>0$.
By Lemma~\ref{lem:cosetsOfH}, every lifted orbit is a coset of $H$, hence an affine line parallel to it.
On such a coset the functional $L(u,v)=cu+v$ is constant.
\end{proof}
%%%%%%%%%%%%%%%%%%%%%%%%%%%%%%%%%
\subsection{Proof of the main theorem and symmetry corollary}

\begin{theorem}[Weighted-product characterization of orbits]\label{thm:weightedOrbitIff}
There exist constants $c>0$ and $w\in(0,1)$, with
\[
w=\frac{c}{1+c},
\]
such that for all valid states $s=(x_1,y_1)$ and $t=(x_2,y_2)$,
\[
s\sim t
\quad\Longleftrightarrow\quad
x_1^w y_1^{1-w}=x_2^w y_2^{1-w}.
\]
Equivalently, the trading orbits are exactly the level sets of the weighted geometric mean
\[
\Phi_w(x,y):=x^w y^{1-w}.
\]
\end{theorem}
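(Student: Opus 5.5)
We will derive this directly from Theorem~\ref{thm:logLines} by transporting the log-space description back through the bijection $\ell:\mathcal S\to\mathbb R^2$. First, take the constant $c>0$ produced by Theorem~\ref{thm:logLines}. Set $w:=c/(1+c)$; since $c>0$, we get $w\in(0,1)$ and $1-w=1/(1+c)>0$. Conversely $c=w/(1-w)$, so this reparameterization is a bijection $(0,\infty)\to(0,1)$ and no information is lost.

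Next we establish the equivalence for valid $s=(x_1,y_1)$ and $t=(x_2,y_2)$. Write $z=\ell(s)$ and $z'=\ell(t)$. By definition of $\approx$ we have $s\sim t\iff z\approx z'$. By Lemma~\ref{lem:cosetsOfH}, this holds iff $z-z'\in H$. By Theorem~\ref{thm:Hline}, that holds iff $(\log y_1-\log y_2)=-c(\log x_1-\log x_2)$, i.e.\ iff $c\log x_1+\log y_1=c\log x_2+\log y_2$. Multiplying by $1-w=1/(1+c)>0$ turns this into $w\log x_1+(1-w)\log y_1=w\log x_2+(1-w)\log y_2$. Since $\exp$ is injective, this is equivalent to $x_1^wy_1^{1-w}=x_2^wy_2^{1-w}$. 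Each step is an honest biconditional, and it is the use of Lemma~\ref{lem:cosetsOfH} (rather than only the forward direction of Theorem~\ref{thm:logLines}) that gives the reverse implication. That reverse direction is the essential point: every state on a level set is actually reachable, so the orbits are \emph{exactly} the level sets and not merely subsets of them.

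Finally, we read off the level-set formulation. For valid $s$, the orbit $[s]$ equals $\{t\in\mathcal S:\Phi_w(t)=\Phi_w(s)\}$ by the equivalence just proved. Every level set $\{\Phi_w=k\}$ with $k>0$ is nonempty (it contains $(k,k)$), so it is the orbit of any of its points. Validity invariance (Lemma~\ref{lem:orbitPreservesValidity}) guarantees that orbits of valid states contain only valid states, so nothing outside $\mathcal S$ needs to be handled.

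The only potential obstacle is bookkeeping. We must use the coset characterization in both directions and keep the orbit relation restricted to valid states where $\ell$ is defined. Beyond that the argument is a direct translation of earlier results, with no new estimates required.
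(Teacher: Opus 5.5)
Your proposal is correct and follows essentially the same route as the paper. Both arguments pass through $\ell$ and use that log-orbits are exactly the translates of $H=\{v=-cu\}$, so that $s\sim t$ iff $c\log x_1+\log y_1=c\log x_2+\log y_2$; they then rescale by $1/(1+c)$ and exponentiate, with your explicit use of the coset lemma for the reverse implication simply unpacking what the paper uses when it says two points share a lifted orbit iff they share the value of $L(u,v)=cu+v$.
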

\begin{proof}
By Theorem~\ref{thm:logLines}, there exists $c>0$ such that every lifted orbit in log space is a translate of the line
\[
H=\{(u,v)\in\mathbb R^2 : v=-c\,u\}.
\]
Equivalently, if we define
\[
L(u,v):=c\,u+v,
\]
then two points of $\mathbb R^2$ lie in the same lifted orbit if and only if they have the same $L$-value.

Let $s=(x_1,y_1)$ and $t=(x_2,y_2)$ be valid states, and write
\[
\ell(s)=(\log x_1,\log y_1), \qquad \ell(t)=(\log x_2,\log y_2).
\]
Then
\[
s\sim t
\quad\Longleftrightarrow\quad
\ell(s)\approx \ell(t)
\quad\Longleftrightarrow\quad
L(\ell(s))=L(\ell(t)).
\]
Since
\[
L(\ell(s)) = c\log x_1+\log y_1 = \log(x_1^c y_1),
\]
and similarly for $t$, this is equivalent to
\[
x_1^c y_1 = x_2^c y_2.
\]
Now set
\[
w:=\frac{c}{1+c}\in(0,1).
\]
Because all quantities are positive, raising both sides to the power $1/(1+c)$ preserves equivalence, so
\[
x_1^c y_1 = x_2^c y_2
\quad\Longleftrightarrow\quad
x_1^{c/(1+c)} y_1^{1/(1+c)}
=
x_2^{c/(1+c)} y_2^{1/(1+c)}.
\]
That is,
\[
x_1^w y_1^{1-w}=x_2^w y_2^{1-w}.
\]
Thus
\[
s\sim t
\quad\Longleftrightarrow\quad
\Phi_w(s)=\Phi_w(t),
\]
so the trading orbits are exactly the level sets of $\Phi_w$.
\end{proof}

\begin{corollary}[Constant-product characterization (Corollary~\ref{cor:main_cp} restated)]
Under Assumption~\ref{ass:tokenSymmetry} (token relabeling symmetry), for all valid states $s=(x_1,y_1)$ and $t=(x_2,y_2)$,
\[
s\sim t
\quad\Longleftrightarrow\quad
x_1 y_1 = x_2 y_2.
\]
Equivalently, the trading orbits are exactly the level sets of
\[
\Phi(x,y):=xy.
\]
\end{corollary}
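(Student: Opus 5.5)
By Theorem~\ref{thm:weightedOrbitIff} we already have $c>0$ and $w=c/(1+c)$ such that $s\sim t$ holds exactly when $x_1^c y_1=x_2^c y_2$. Equivalently, in log space the lifted orbits are the level sets of $L(u,v)=cu+v$. The plan is to show that token relabeling symmetry forces $c=1$. Once $c=1$, the criterion reads $x_1y_1=x_2y_2$, and the orbits are exactly the level sets of $\Phi(x,y)=xy$ (and $w=\tfrac12$).

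To get $c=1$, I would take a single explicit pair of equivalent states and apply $\tau$ to it. Take $s=(1,1)$ and $t=(e,e^{-c})$. Both are valid, and $L(\ell(t))=c\cdot 1-c=0=L(\ell(s))$, so $s\sim t$ by Theorem~\ref{thm:weightedOrbitIff}. Assumption~\ref{ass:tokenSymmetry} then gives $\tau(s)\sim\tau(t)$, that is, $(1,1)\sim(e^{-c},e)$. Applying Theorem~\ref{thm:weightedOrbitIff} once more yields $c\cdot(-c)+1=0$, so $c^2=1$. Since $c>0$, this gives $c=1$.

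Finally, I would substitute $c=1$ back into the criterion $x_1^c y_1=x_2^c y_2$. This directly gives $s\sim t\iff x_1y_1=x_2y_2$ for valid states, and $w=c/(1+c)=\tfrac12$. Both directions of the equivalence come from Theorem~\ref{thm:weightedOrbitIff}; symmetry is used only to pin down the parameter.

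I do not expect a genuine obstacle here. The one point that needs care is that Assumption~\ref{ass:tokenSymmetry} is stated only for valid states, so I must check that both $s,t$ and their images under $\tau$ lie in $\mathcal S$. This is immediate, since $\tau$ preserves positivity. Using a concrete pair also avoids the slightly informal ``partition is invariant, so the slopes match'' argument given in the main-text proof of Corollary~\ref{cor:constantProductDirect}.
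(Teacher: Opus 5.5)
Your proof is correct and follows the same route as the paper: invoke Theorem~\ref{thm:weightedOrbitIff}, use token relabeling symmetry to force $c^2=1$, and substitute $c=1$. The only difference is that you apply symmetry to the explicit pair $(1,1)\sim(e,e^{-c})$, whereas the paper compares the families of lines with slopes $-c$ and $-1/c$; your version is a concrete instance of the paper's step ``the partition is unchanged, so $c=1/c$'' and makes that step fully rigorous.
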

\begin{proof}
By Theorem~\ref{thm:weightedOrbitIff}, there exists $c>0$ such that lifted orbits are affine lines of slope $-c$, equivalently
\[
s\sim t
\quad\Longleftrightarrow\quad
x_1^w y_1^{1-w}=x_2^w y_2^{1-w},
\qquad
w=\frac{c}{1+c}.
\]

Token relabeling symmetry implies that swapping coordinates preserves the orbit relation.
In log space, swapping coordinates sends a line
\[
v=-c\,u+d
\]
to the line
\[
u=-c\,v+d,
\]
equivalently
\[
v=-\frac{1}{c}u+\frac{d}{c}.
\]
Hence the orbit partition is preserved when slope $-c$ is replaced by slope $-1/c$.
Since the orbit partition is unchanged, we must have
\[
c=\frac1c.
\]
Because $c>0$, it follows that $c=1$, and therefore
\[
w=\frac{c}{1+c}=\frac12.
\]

Substituting into Theorem~\ref{thm:weightedOrbitIff} yields
\[
s\sim t
\quad\Longleftrightarrow\quad
x_1^{1/2}y_1^{1/2}=x_2^{1/2}y_2^{1/2}.
\]
Since all quantities are positive, this is equivalent to
\[
x_1 y_1 = x_2 y_2.
\]
Thus the trading orbits are exactly the level sets of $xy$.
\end{proof}

\subsection{Proof of the multi-asset theorem and symmetry corollary}

\begin{lemma}[Negative-slope slices imply hyperplane]\label{lem:slicesToHyperplane}
Let \(S\subseteq \mathbb R^n\) be nonempty. Assume that for every point \(p\in S\) and every pair \(1\le i<j\le n\), the coordinate \(2\)-slice
\[
S\cap \Pi_{ij}(p),
\qquad
\Pi_{ij}(p):=\{x\in\mathbb R^n : x_k=p_k \text{ for all } k\neq i,j\},
\]
is an affine line of negative slope in the \((x_i,x_j)\)-coordinates.

Then \(S\) is an affine hyperplane of the form
\[
a_1x_1+\cdots+a_nx_n=d
\]
with \(a_1,\dots,a_n>0\).
\end{lemma}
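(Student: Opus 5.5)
The plan is to show that \(S\) is the graph of a function of the first \(n-1\) coordinates, and then that this function is affine with negative coefficients. The case \(n=2\) is exactly the hypothesis, so assume \(n\ge 3\) (for \(n=1\) there are no pairs and the statement is vacuous or trivial). I will single out the last coordinate \(x_n\) and use the slices \(S\cap\Pi_{in}(p)\) to control it.

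\textbf{Step 1: \(S\) is a graph over \(\mathbb R^{n-1}\).} First, uniqueness. If \(p,q\in S\) agree in coordinates \(1,\dots,n-1\), then \(q\in\Pi_{1n}(p)\). The slice \(S\cap\Pi_{1n}(p)\) is a line of negative slope in \((x_1,x_n)\), so it contains exactly one point with a given \(x_1\). Hence \(q=p\).

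Second, full domain. Fix \(p_0\in S\) and a target \((y_1,\dots,y_{n-1})\). A line of negative slope projects onto all of \(\mathbb R\) in the \(x_i\)-coordinate. So from \(p_0\) I move inside \(S\cap\Pi_{1n}(p_0)\) to a point with \(x_1=y_1\). From there I move inside the \((2,n)\)-slice to a point with \(x_2=y_2\); this does not change \(x_1\). Continuing through \(i=n-1\) reaches a point of \(S\) with first coordinates \(y_1,\dots,y_{n-1}\). Thus \(S=\{x_n=g(x_1,\dots,x_{n-1})\}\) for some \(g:\mathbb R^{n-1}\to\mathbb R\).

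\textbf{Step 2: \(g\) is separately affine, with negative partial slopes.} For \(i<n\), varying \(x_i\) with the other \(x_k\) (\(k<n\)) fixed traces exactly the slice \(S\cap\Pi_{in}(p)\). That slice is a line of negative slope, so \(g\) is affine in each variable separately, with strictly negative slope. A straightforward induction on the number of variables shows a separately affine function is multiaffine:
\[
g(x)=\sum_{A\subseteq\{1,\dots,n-1\}}c_A\prod_{k\in A}x_k .
\]

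\textbf{Step 3 (main obstacle): the cross terms vanish.} Fix \(i<j<n\) and a point \(p\in S\). The slice \(S\cap\Pi_{ij}(p)\) fixes \(x_n=p_n\) and all \(x_k\) with \(k\ne i,j\). In the \((x_i,x_j)\)-plane it is therefore the level set
\[
\{\alpha+\beta x_i+\gamma x_j+\varepsilon x_ix_j=p_n\},
\]
where \(\alpha,\beta,\gamma,\varepsilon\) are polynomials in the frozen coordinates. Suppose \(\varepsilon\ne 0\). The level set is then \(\varepsilon(x_i+\gamma/\varepsilon)(x_j+\beta/\varepsilon)=\text{const}\). This is a line only at the single degenerate level, where it is actually a union of two axis-parallel lines, and neither case is a line of negative slope.

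This constraint holds through every point of \(S\). By Step 1, every choice of frozen coordinates and of \(x_n\) is realized by some point of \(S\). So \(\varepsilon\), a polynomial in the remaining \(n-3\) variables, vanishes identically. Hence \(c_A=0\) for every \(A\supseteq\{i,j\}\), and since the pair \(i,j\) was arbitrary, \(g\) is affine: \(g(x)=d-\sum_{i<n}a_ix_i\).

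\textbf{Step 4: conclusion.} Step 2 gives \(a_i>0\) for every \(i<n\). Setting \(a_n=1\) yields \(S=\{a_1x_1+\dots+a_nx_n=d\}\) with all \(a_i>0\). The delicate point is Step 3: it is what converts separate linearity into joint linearity, and it is the only place where the \((i,j)\)-slices with \(i,j<n\) are used.
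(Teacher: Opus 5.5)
Your proposal is correct and follows the paper's proof essentially step for step: you show $S$ is a graph over the first $n-1$ coordinates using the $(i,n)$-slices, then use separate affineness to get a multiaffine function, then kill the mixed coefficients because a (possibly degenerate) hyperbola is never a single line of negative slope, and finish with the polynomial-identity argument. The only quibble is your aside on $n=1$: there the hypothesis is vacuous but the conclusion (that $S$ is a single point) can fail, so the lemma tacitly requires $n\ge 2$ rather than being trivial.
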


\begin{proof}
Fix any \(p\in S\). We first show that \(S\) is the graph of a function \(f:\mathbb R^{n-1}\to\mathbb R\).

\emph{Existence.} Let \((u_1,\dots,u_{n-1})\in\mathbb R^{n-1}\) be given. Starting from \(p\), we perform \(n-1\) moves: at step \(i\), walk along the \((x_i,x_n)\)-slice of the current point until the \(i\)-th coordinate equals \(u_i\). This is possible because the slice is a line of nonzero slope. The move changes only coordinates \(x_i\) and \(x_n\), so coordinates set in earlier steps are preserved. After \(n-1\) steps, the first \(n-1\) coordinates of the resulting point are \(u_1,\dots,u_{n-1}\). Hence there exists \(u_n\in\mathbb R\) such that
\[
(u_1,\dots,u_{n-1},u_n)\in S.
\]

\emph{Uniqueness.} If
\[
(u_1,\dots,u_{n-1},v_n),\ (u_1,\dots,u_{n-1},w_n)\in S,
\]
they share their \(x_1\)-coordinate and lie on the same \((x_1,x_n)\)-slice; since that slice is a line of nonzero slope, \(v_n=w_n\).

Therefore
\[
S=\{(x_1,\dots,x_{n-1},f(x_1,\dots,x_{n-1})) : (x_1,\dots,x_{n-1})\in\mathbb R^{n-1}\}
\]
for a uniquely determined function \(f:\mathbb R^{n-1}\to\mathbb R\).

Fix any \(i\in\{1,\dots,n-1\}\) and any values \(\bar x_k\in\mathbb R\) for \(k\in\{1,\dots,n-1\}\setminus\{i\}\). The graph of the one-variable restriction \(x_i\mapsto f(\bar x_1,\dots,x_i,\dots,\bar x_{n-1})\) is exactly the \((x_i,x_n)\)-slice of \(S\) through any of its points (all coordinates \(x_k\) with \(k\notin\{i,n\}\) being held at \(\bar x_k\)). By hypothesis this slice is an affine line, hence the restriction is affine. Thus \(f\) is affine in each variable separately, hence multi-affine:
\[
f(x_1,\dots,x_{n-1})=\sum_{I\subseteq\{1,\dots,n-1\}}c_I\prod_{k\in I}x_k,
\]
with \(c_I\in\mathbb R\) and the empty product equal to \(1\).

We claim that all mixed terms vanish. Fix \(1\le i<j\le n-1\) and any values \(\bar x_k\in\mathbb R\) for \(k\in\{1,\dots,n-1\}\setminus\{i,j\}\). The resulting two-variable restriction has the form
\[
g(u,v)=\alpha uv+\beta u+\gamma v+\delta,
\]
where the coefficients \(\alpha,\beta,\gamma,\delta\) depend on the frozen values \(\bar x_k\); in particular
\[
\alpha \;=\; \sum_{T\subseteq\{1,\dots,n-1\}\setminus\{i,j\}}\, c_{\{i,j\}\cup T}\prod_{k\in T}\bar x_k.
\]
Its level sets \(g(u,v)=z_0\) are exactly the corresponding \((x_i,x_j)\)-slices of \(S\). By hypothesis each such level set is an affine line of negative slope. If \(\alpha\neq 0\), then \(g(u,v)=z_0\) is a (possibly degenerate) hyperbola, which cannot be a single affine line; this contradicts the hypothesis. Hence \(\alpha=0\) for every choice of the \(\bar x_k\). Since \(\alpha\) is a polynomial in these variables, all its coefficients vanish, i.e.\ \(c_{\{i,j\}\cup T}=0\) for every \(T\subseteq\{1,\dots,n-1\}\setminus\{i,j\}\).

Since this holds for every pair \(i<j\), every coefficient \(c_I\) with \(|I|\ge 2\) vanishes. Hence \(f\) is affine:
\[
f(x_1,\dots,x_{n-1})=b_1x_1+\cdots+b_{n-1}x_{n-1}+d
\]
for suitable \(b_1,\dots,b_{n-1},d\in\mathbb R\). Therefore
\[
S=\{(x_1,\dots,x_n)\in\mathbb R^n : x_n=b_1x_1+\cdots+b_{n-1}x_{n-1}+d\},
\]
so \(S\) is an affine hyperplane.

Finally, if all coordinates except \(x_i,x_n\) are fixed, then the corresponding slice is given by
\[
x_n=b_i x_i+\text{constant},
\]
hence has slope \(b_i\). By hypothesis this slope is negative, so \(b_i<0\) for all \(i<n\). Rewriting the equation of the hyperplane as
\[
(-b_1)x_1+\cdots+(-b_{n-1})x_{n-1}+x_n=d,
\]
and setting
\[
a_i:=-b_i \quad (1\le i\le n-1), \qquad a_n:=1,
\]
we obtain
\[
S=\{x\in\mathbb R^n : a_1x_1+\cdots+a_nx_n=d\}
\]
with \(a_1,\dots,a_n>0\), as claimed.
\end{proof}

\begin{lemma}[Hyperplane implies negative-slope slices]\label{lem:hyperplaneToSlices}
Let \(S\subseteq \mathbb R^n\) be an affine hyperplane of the form
\[
a_1x_1+\cdots+a_nx_n=d
\]
with \(a_1,\dots,a_n>0\).

Then for every point \(p\in S\) and every pair \(1\le i<j\le n\), the coordinate \(2\)-slice
\[
S\cap \Pi_{ij}(p),
\qquad
\Pi_{ij}(p):=\{x\in\mathbb R^n : x_k=p_k \text{ for all } k\neq i,j\},
\]
is an affine line of negative slope in the \((x_i,x_j)\)-coordinates.
\end{lemma}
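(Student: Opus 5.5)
The plan is a direct computation: intersect the hyperplane with the slice and solve for one coordinate. Fix $p\in S$ and a pair $i<j$. A point $x\in\Pi_{ij}(p)$ is determined by its two free coordinates $(x_i,x_j)$, since $x_k=p_k$ for all $k\neq i,j$. Such a point lies in $S$ exactly when
\[
a_i x_i + a_j x_j = d - \sum_{k\neq i,j} a_k p_k =: d_{ij}(p).
\]
So $S\cap\Pi_{ij}(p)$ is identified, in the $(x_i,x_j)$-coordinates, with the solution set of this single linear equation.

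Because $a_j>0$, I will solve for $x_j$ and obtain
\[
x_j = -\frac{a_i}{a_j}\,x_i + \frac{d_{ij}(p)}{a_j}.
\]
For each $x_i\in\mathbb R$ there is exactly one solution, so the slice is the full graph of an affine function of $x_i$. That is a non-vertical affine line in the $(x_i,x_j)$-plane with slope $-a_i/a_j$. Since $a_i,a_j>0$, the slope is strictly negative. The slice is nonempty, as it contains $p$, which is consistent with $d_{ij}(p)=a_ip_i+a_jp_j$.

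There is no real obstacle here. The only point needing care is to read ``affine line of negative slope in the $(x_i,x_j)$-coordinates'' the same way Lemma~\ref{lem:slicesToHyperplane} uses it: the whole line, defined over all of $x_i\in\mathbb R$, and not merely a subset of it. The explicit graph description above gives exactly that. The positivity of $a_j$ is what guarantees the line is not vertical, and the positivity of $a_i$ is what guarantees it is not horizontal.
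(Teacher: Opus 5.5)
Your proposal is correct and follows the paper's proof: restrict the hyperplane equation to the slice to get $a_ix_i+a_jx_j=d-\sum_{k\neq i,j}a_kp_k$, then solve for $x_j$ to obtain a line of slope $-a_i/a_j<0$. Your added remarks, that the slice is the full graph over all $x_i\in\mathbb R$ and contains $p$, go slightly beyond the paper's text and are harmless.
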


\begin{proof}
Fix \(p\in S\) and \(1\le i<j\le n\). On the slice \(\Pi_{ij}(p)\) every coordinate except \(x_i\) and \(x_j\) equals the corresponding coordinate of \(p\), so the hyperplane equation reduces to
\[
a_ix_i+a_jx_j = d-\sum_{k\neq i,j}a_kp_k =: c.
\]
Solving for \(x_j\) gives
\[
x_j = -\frac{a_i}{a_j}\,x_i + \frac{c}{a_j},
\]
which describes an affine line in the \((x_i,x_j)\)-plane with slope \(-a_i/a_j<0\), since \(a_i,a_j>0\).
\end{proof}

\begin{theorem}[Multi-asset orbit classification (Theorem~\ref{thm:multiasset} restated)]
The trading orbits of an $n$-asset AMM are exactly the level sets of
a weighted product
\[
  \Phi(x_1,\ldots,x_n)=\prod_{i=1}^{n} x_i^{w_i},
  \qquad w_i>0,\quad \textstyle\sum_{i} w_i=1,
\]
if and only if, for every valid state $p\in\mathcal S$ and every pair
$1\le i<j\le n$, the two-token restriction $[p]_{ij}$, viewed in the
coordinates $(x_i,x_j)$, satisfies the two-asset axioms
(Assumptions~\ref{ass:validity}--\ref{ass:unit}).
\end{theorem}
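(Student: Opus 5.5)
The proof passes to log coordinates $u_i=\log x_i$ and treats the two directions separately. The work is done by Lemma~\ref{lem:slicesToHyperplane} and Lemma~\ref{lem:hyperplaneToSlices} combined with the two-asset classification (Theorem~\ref{thm:weightedOrbitIff} and its converse in the Remark after Theorem~\ref{thm:main}). Throughout, $\ell(x)=(\log x_1,\dots,\log x_n)$ is the coordinatewise logarithm, a bijection $\mathcal S\to\mathbb R^n$. It sends each coordinate slice $\Pi_{ij}(p)\cap\mathcal S$ onto the coordinate slice $\Pi_{ij}(\ell(p))$ in log space.

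\emph{Backward direction.} Assume every restriction $[p]_{ij}$ satisfies Assumptions~\ref{ass:validity}--\ref{ass:unit} in the coordinates $(x_i,x_j)$. By Theorem~\ref{thm:weightedOrbitIff}, each $[p]_{ij}$ is a level set of some $x_i^{w}x_j^{1-w}$, so $\ell([p]_{ij})$ is a full affine line of negative slope in the $(u_i,u_j)$-plane. Now fix a global orbit $[p]$ and set $S:=\ell([p])$. For every $q\in S$ the slice $S\cap\Pi_{ij}(q)$ equals $\ell([\ell^{-1}q]_{ij})$, because $[\ell^{-1}q]=[p]$. Lemma~\ref{lem:slicesToHyperplane} therefore applies and gives $S=\{a\cdot u=d\}$ with all $a_i>0$. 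Normalizing, $w_i=a_i/\sum_k a_k$, shows that $[p]$ is a level set of $\prod x_i^{w_i}$.

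The main obstacle is showing that the weights do not depend on the orbit, and that every level set really is a single orbit. For the weights, the slope of the $(i,n)$-slice determines the ratio $a_i/a_n$. So I would show that the two-asset weight of the $(i,j)$-restriction is the same at every base point $p$. The approach is to exploit unit invariance across orbits: rescaling $x_i$ and $x_j$ maps $(i,j)$-restrictions to $(i,j)$-restrictions, and rescaling the frozen coordinates should likewise transport orbits. If the hypothesis as stated does not deliver this, I would read it as the intended translation-invariance in log space, so that Lemma~\ref{lem:translationInvariantApprox} extends verbatim to $\mathbb R^n$. All log-orbits are then cosets of the orbit $H$ through the origin, exactly as in Lemma~\ref{lem:cosetsOfH}. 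Once the normal vectors agree, the orbits are distinct parallel hyperplanes with a common normal. Since each is a full hyperplane and they partition $\mathbb R^n$, they are exactly the level sets of $\sum w_iu_i$.

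\emph{Forward direction.} Assume the orbits are the level sets of $\Phi=\prod x_i^{w_i}$. Then each log-orbit is the hyperplane $\sum w_iu_i=\mathrm{const}$. By Lemma~\ref{lem:hyperplaneToSlices}, each $(i,j)$-slice is a line of slope $-w_i/w_j$. Exponentiating, $[p]_{ij}$ is a level set of $x_i^{w_i}x_j^{w_j}$, which is equivalent to a level set of $x_i^{w}x_j^{1-w}$ with $w=w_i/(w_i+w_j)$. The converse Remark to Theorem~\ref{thm:main} then says that such a restriction satisfies validity invariance, Pareto efficiency and unit invariance. Pareto efficiency holds because the curve is strictly decreasing. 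Unit invariance holds because rescaling multiplies $\Phi$ by a constant, which permutes the level curves.
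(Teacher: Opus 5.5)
\textbf{Gap: the common normal across orbits.} Your two directions follow the paper's route: log coordinates, Lemma~\ref{lem:slicesToHyperplane} applied orbit by orbit, and Lemma~\ref{lem:hyperplaneToSlices} plus the converse of Theorem~\ref{thm:main}. But the step you flag yourself, that all orbits share one normal, is not established. The tool you propose is not available. The hypothesis only gives two-asset unit invariance on each fixed slice $\Pi_{ij}(p)$, i.e.\ rescaling $x_i,x_j$ with every other reserve frozen. It says nothing that transports orbits when the frozen coordinates change. Falling back on ``reading the hypothesis as'' global translation invariance in log space proves a different theorem with an extra $n$-asset axiom. The paper's proof disposes of this point only with the parenthetical ``which is the same for all parallel orbits''. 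It does, however, follow from the stated hypothesis, so no strengthening is needed.

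\textbf{The missing observation.} Theorem~\ref{thm:main} yields one weight per slice, shared by every orbit that crosses that slice. Fix a slice $\Pi_{ij}(q)$. The restrictions through all valid points $r$ of that slice are the same two-asset AMM. So all intersections $[r]\cap\Pi_{ij}(q)$ are level sets of a single $x_i^{w}x_j^{1-w}$, i.e.\ log-lines of one common slope.

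Once Lemma~\ref{lem:slicesToHyperplane} gives each log-orbit as $\{a\cdot u=d\}$ with all $a_k>0$, such a hyperplane meets every coordinate $2$-plane $\{u:\ u_k=\log q_k \text{ for } k\neq i,j\}$ in a line of slope $-a_i/a_j$. Hence two orbits with normals $a,a'$ both cross the same slice and must satisfy $a_i/a_j=a'_i/a'_j$ for all $i,j$, so $a$ and $a'$ are proportional. After normalizing, your closing partition argument works as written. Your other direction matches the paper.
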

\begin{proof}
\textbf{Backward direction.}
Suppose the orbits are level sets of $\Phi=\prod x_i^{w_i}$.
In log coordinates $u_i=\log x_i$, each orbit is the hyperplane
\[
w_1u_1+\cdots+w_nu_n=\mathrm{const}
\]
with all $w_i>0$.
By Lemma~\ref{lem:hyperplaneToSlices}, every coordinate $2$-slice of this hyperplane is a line of negative slope.
Exponentiating back, for every valid state $p$ and every pair $(i,j)$,
the restricted orbit $[p]_{ij}$ is a weighted geometric-mean curve
\[
x_i^{w_i}x_j^{w_j}=\mathrm{const}
\]
in the $(x_i,x_j)$-plane, so it satisfies the two-asset axioms by
Theorem~\ref{thm:main}.

\textbf{Forward direction.}
Suppose that for every valid state $p$ and every pair $(i,j)$, the
two-token restriction $[p]_{ij}$ satisfies the two-asset axioms.
Fix a valid state $s$ and consider its orbit $O$.
Pass to log coordinates $u_i=\log x_i$ and let $S=\ell(O)$ be the corresponding log-orbit.

For any point $p\in S$ and any pair $1\le i<j\le n$, the coordinate $2$-slice $S\cap\Pi_{ij}(p)$ is exactly the log-orbit of the two-token restriction obtained by fixing all reserves except $(x_i,x_j)$ at the values determined by $p$.
By hypothesis this restriction satisfies the two-asset axioms, so by Theorem~\ref{thm:main} its orbit is a level set of $x_i^{a}x_j^{1-a}$ for some $a\in(0,1)$.
In log coordinates this is a line of negative slope.

Since every $2$-slice of $S$ is a line of negative slope, Lemma~\ref{lem:slicesToHyperplane} gives
\[
S=\{u\in\mathbb R^n : a_1u_1+\cdots+a_nu_n=d\}
\]
with all $a_i>0$.
Setting $w_i=a_i/\!\sum_k a_k$ and exponentiating back, the orbit of $s$ is the level set of $\prod x_i^{w_i}$.

Since this holds for every orbit, and the weights are determined by the hyperplane normal (which is the same for all parallel orbits), the trading orbits are exactly the level sets of $\Phi=\prod x_i^{w_i}$ with $w_i>0$ and $\sum w_i=1$.
\end{proof}

\begin{corollary}[Multi-asset constant product (Corollary~\ref{cor:multiassetCP} restated)]
Under the hypotheses of Theorem~\ref{thm:multiasset}, if additionally
the orbit structure is invariant under every permutation of tokens,
then $w_i = 1/n$ for all~$i$ and the orbits satisfy
$\prod_{i=1}^n x_i = \mathrm{const}$.
\end{corollary}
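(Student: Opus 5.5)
By Theorem~\ref{thm:multiasset}, there are weights $w_i>0$ with $\sum_i w_i=1$ such that every orbit is a level set of $\Phi=\prod_i x_i^{w_i}$. In log coordinates $u_i=\log x_i$ the orbits are therefore the parallel hyperplanes $\langle w,u\rangle=\mathrm{const}$. The plan is to show that permutation invariance of the orbit partition forces the normal direction $w$ to be fixed by every coordinate permutation. The normalization $\sum_i w_i=1$ then pins down $w_i=1/n$.

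\textbf{Step 1: transport the symmetry to log space.} For a permutation $\sigma$, let $P_\sigma$ act on $\mathbb R^n$ by permuting coordinates. This action commutes with $\ell$, and the hypothesis says $s\sim t \iff P_\sigma s\sim P_\sigma t$ for valid $s,t$. Hence $P_\sigma$ maps each log-orbit onto a log-orbit. In particular it maps the log-orbit of the origin, which is the linear hyperplane $w^\perp=\{u:\langle w,u\rangle=0\}$, onto the log-orbit of $P_\sigma 0=0$. So $P_\sigma(w^\perp)=w^\perp$.

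\textbf{Step 2: identify the normal.} Since $P_\sigma$ is orthogonal, $P_\sigma(w^\perp)=(P_\sigma w)^\perp$. Two hyperplanes through the origin coincide only if their normals are proportional, so $P_\sigma w=\lambda_\sigma w$ for some scalar $\lambda_\sigma$. The entries of $P_\sigma w$ are those of $w$ rearranged, so both vectors have the same coordinate sum. Because $\sum_i w_i=1\neq 0$, this gives $\lambda_\sigma=1$. Applying this to the transposition $\sigma=(i\,j)$ gives $w_i=w_j$ for all $i,j$, and hence $w_i=1/n$.

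\textbf{Step 3: conclude.} With $w_i=1/n$ we have $\Phi=(\prod_i x_i)^{1/n}$. Since $t\mapsto t^n$ is injective on positive reals, the level sets of $\Phi$ coincide with those of $\prod_i x_i$, so the orbits satisfy $\prod_{i=1}^n x_i=\mathrm{const}$.

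The only point needing care is Step 2: the orbit through a point $p$ is sent to the orbit through $P_\sigma p$, not to itself. Restricting the argument to the origin, which is fixed by every permutation, avoids this issue. An alternative would be to use the two-asset Corollary~\ref{cor:main_cp} on each $(i,j)$-slice through a point with equal coordinates. That route gives $w_i/w_j=1$ directly, but it requires checking that the slice restriction inherits the token symmetry, so I would use the linear-algebra argument above instead.
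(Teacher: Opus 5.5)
Your proof is correct and takes essentially the same route as the paper's. Both arguments observe that a permutation replaces the weight vector $w$ by its permuted version, that invariance of the orbit partition forces the two to define the same level sets, and that the normalization $\sum_i w_i=1$ then forces $w_{\sigma(i)}=w_i$; you additionally make explicit the step the paper leaves implicit, by comparing the single log-orbit $w^\perp$ through the fixed point $\ell(1,\dots,1)=0$ and using that hyperplanes through the origin coincide only when their normals are proportional.
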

\begin{proof}
By Theorem~\ref{thm:multiasset}, the orbits are level sets of $\prod x_i^{w_i}$ with $w_i>0$ and $\sum w_i=1$.
A permutation $\sigma$ of the tokens sends the invariant $\prod x_i^{w_i}$ to $\prod x_i^{w_{\sigma(i)}}$.
If the orbit structure is invariant under every such permutation, the two invariants must define the same level sets, which forces $w_{\sigma(i)}=w_i$ for all $\sigma$ and all $i$.
Hence all weights are equal, and $\sum w_i=1$ gives $w_i=1/n$.
\end{proof}

\end{document}